\theoremstyle{plain}
\newtheorem{thm}{Theorem}[section]
\newtheorem{lem}[thm]{Lemma}
\newtheorem{prop}[thm]{Proposition}
\newtheorem{rem}[thm]{Remark}
\theoremstyle{definition}
\numberwithin{equation}{section}
\newcommand{\Int}{\operatorname{Int}}
\newcommand{\R}{{\mathbb R}}
\newcommand{\eps}{{\varepsilon}}
\newcommand{\re}{\operatorname{Re}}
\newcommand{\im}{\operatorname{Im}}
\newcommand{\Ham}{\mathbf{H}}
\newcommand{\supp}{\operatorname{supp}}
\newcommand{\Lap}{\Delta}
\newcommand{\RN}[1]{%
	\textup{\uppercase\expandafter{\romannumeral#1}}%
}
\begin{document}

\author{Sung-Soo Byun}
\address{Center for Mathematical Challenges, Korea Institute for Advanced Study, 85 Hoegiro, Dongdaemun-gu, Seoul 02455, Republic of Korea}
\email{sungsoobyun@kias.re.kr}

\title[Equilibrium measure for the quadratic potentials with a point charge]{Planar equilibrium measure problem \\ in the quadratic fields with a point charge
}

\begin{abstract}
We consider a two-dimensional equilibrium measure problem under the presence of quadratic potentials with a point charge and derive the explicit shape of the associated droplets. 
This particularly shows that the topology of the droplets reveals a phase transition: (i) in the post-critical case, the droplets are doubly connected domain; (ii) in the critical case, they contain two merging type singular boundary points; (iii) in the pre-critical case, they consist of two disconnected components.  
From the random matrix theory point of view, our results provide the limiting spectral distribution of the complex and symplectic elliptic Ginibre ensembles conditioned to have zero eigenvalues, which can also be interpreted as a non-Hermitian extension of the Marchenko-Pastur law.   
\end{abstract}


\date{\today}

\thanks{Sung-Soo Byun was partially supported by Samsung Science and Technology Foundation (SSTF-BA1401-51) and by the National Research Foundation of Korea (NRF-2019R1A5A1028324) and by a KIAS Individual Grant (SP083201) via the Center for Mathematical Challenges at Korea Institute for Advanced Study.
}

\keywords{Planar equilibrium measure problem, two-dimensional Coulomb gases, elliptic Ginibre ensemble, conditional point process, conformal mapping method, a non-Hermitian extension of the Marchenko-Pastur law}

\maketitle 

\section{Introduction and Main results}

In this paper, we study a planar equilibrium problem with logarithmic interaction under the influence of quadratic potentials with a point charge. 
This problem is purely deterministic, but its motivation comes from the random world, more precisely, from the random matrix theory or the theory of two-dimensional Coulomb gases in general.  
To be more concrete, for given points $\boldsymbol{\zeta}=(\zeta_j)_{j=1}^N \in \mathbb{C}^N$ of configurations, we consider the Hamiltonians 
\begin{align}
\Ham_N^{\mathbb{C}}(\boldsymbol{\zeta}) & := \sum_{ 1\le j<k \le N } \log \frac{1}{|\zeta_j-\zeta_k|^2} +N \sum_{j=1}^N W(\zeta_j), \label{Ham C}
\\
\Ham_N^{ \mathbb{H} } (\boldsymbol{\zeta}) & := \sum_{ 1\le j<k \le N } \log \frac{1}{|\zeta_j-\zeta_k|^2} + \sum_{ 1\le j \le k \le N } \log \frac{1}{ |\zeta_j-\bar{\zeta}_k|^2} +2N \sum_{j=1}^N W(\zeta_j). \label{Ham H}
\end{align}
Here $W: \mathbb{C}\to \R$ is a suitable function called the external potential. 
These are building blocks to define joint probability distributions
\begin{equation} \label{Gibbs}
d\mathbb{P}_{N,\beta}^\mathbb{C}(\boldsymbol{\zeta}) \propto e^{ -\frac{\beta}{2} \Ham_N^\mathbb{C}(\boldsymbol{\zeta}) } \prod_{j=1}^N \,dA(\zeta_j), 
\qquad d\mathbb{P}_{N,\beta}^{\mathbb{H}}(\boldsymbol{\zeta}) \propto e^{ -\frac{\beta}{2}\Ham_N^{\mathbb{H}}(\boldsymbol{\zeta}) } \prod_{j=1}^N \,dA(\zeta_j), 
\end{equation}
where $dA(\zeta)=d^2\zeta/\pi$ is the area measure and $\beta>0$ is the inverse temperature. 
Both point processes $\mathbb{P}_{N,\beta}^{ \mathbb{C} }$ and $\mathbb{P}_{N,\beta}^{\mathbb{H}}$ represent two-dimensional Coulomb gas ensembles \cite{forrester2010log,Lewin22,Serfaty}. 
In particular, if $\beta=2$, they are also called determinantal and Pfaffian Coulomb gases respectively due to their special integrable structures, see \cite{byun2022progress,byun2023progress} for recent reviews on this topic.
Furthermore, they have an interpretation as eigenvalues of non-Hermitian random matrices with unitary and symplectic symmetry. 
For instance, if $W(\zeta)=|\zeta|^2$, the ensembles \eqref{Gibbs} corresponds to the eigenvalues of complex and symplectic Ginibre matrices \cite{ginibre1965statistical}. 

One of the fundamental questions regarding such point processes is their macroscopic/global behaviours as $N\to \infty.$
For the case $\beta=2$, this can be regarded as a problem determining the limiting spectral distribution of given random matrices. 
The classical results in this direction include the circular law for the Ginibre ensembles. 
As expected from the structure of the Hamiltonians \eqref{Ham C} and \eqref{Ham H}, the macroscopic behaviours of the system can be effectively described using the logarithmic potential theory \cite{ST97}. 

For this purpose, let us briefly recap some basic notions in the logarithmic potential theory.
Given a compactly supported probability measure $\mu$ on $\mathbb{C}$, the weighted logarithmic energy $I_W[\mu]$ associated with the potential $W$ is given by 
\begin{equation} \label{energy}
I_W[\mu]:= \int_{ \mathbb{C}^2 } \log \frac{1}{ |z-w| }\, d\mu(z)\, d\mu(w) +\int_{ \mathbb{C} } W \,d\mu .
\end{equation}
For a general potential $W$ satisfying suitable conditions, there exists a unique probability measure $\mu_W$ which minimises $I_W[\mu]$.
Such a minimiser $\mu_W$ is called the \textbf{equilibrium measure} associated with $W$ and its support $S_W:= \supp (\mu_W)$ is called the \textbf{droplet}. 
Furthermore, if $W$ is $C^2$-smooth in a neighbourhood of $S_W$, it follows from Frostman's theorem that $\mu_W$ is absolutely continuous with respect to $dA$ and takes the form 
\begin{equation} \label{Frostman thm}
d\mu_W(z)= \Lap W(z) \cdot \mathbbm{1}_{ \{ z \in S_W \} } \,dA(z), 
\end{equation}
where $\Lap := \partial \bar{\partial}$ is the quarter of the usual Laplacian.

In relation with the point processes \eqref{Gibbs}, it is well known \cite{chafai2014first,MR2934715,HM13} that 
\begin{equation} \label{equilibrium convergence}
\mu_{N,W}:= \frac{1}{N}\sum_{j=1}^N \delta_{\zeta_j}  \to \mu_W
\end{equation}
in the weak star sense of measure. 
From the statistical physics viewpoint, this convergence is quite natural since the weighted energy $I_W$ in \eqref{energy} corresponds to the continuum limit of the discrete Hamiltonians \eqref{Ham C} and \eqref{Ham H} after proper renormalisations. 
(In the case of \eqref{Ham H}, it is required to further assume that $W(\zeta)=W(\bar{\zeta})$.)

Contrary to the density \eqref{Frostman thm} of the measure $\mu_W$, there is no general theory on the determination of its support $S_W$. 
(See however \cite{MR1097025} for a general theory on the  regularity and \cite{MR3454377} on the connectivity of the droplet associated with Hele-Shaw type potentials.)
This leads to the following natural question.
\smallskip 
\begin{center}
\textbf{For a given potential $W$, what is the precise shape of the associated droplet? }    
\end{center}
\smallskip 
In view of the energy functional \eqref{energy}, this is a typical form of an inverse problem in the potential theory and is called an equilibrium measure problem. 
Beyond the case when $W$ is radially symmetric (cf. \cite[Section IV.6]{ST97}), this problem is highly non-trivial even for some explicit potentials with a simple form, see \cite{MR4229527,MR3280250,MR3303173,balogh2015equilibrium, brauchart2018logarithmic,del2019equilibrium,legg2021logarithmic,criado2022vector} for some recent works.
Let us also stress that such a problem is important not only because it provides the intrinsic macroscopic behaviours of the point processes \eqref{Gibbs} but also because it plays the role of the first step to perform the Riemann-Hilbert analysis which gives rise to a more detailed statistical information ($k$-point functions) of the point processes, see \cite{MR3280250,MR3668632,MR3849128,MR2921180,MR3303173,MR2078083,MR3306308,MR3383811,MR3670735,lee2020strong,MR3939592} for extensive studies in this direction.
In this work, we aim to contribute to the equilibrium problems associated with the potentials \eqref{Q main} and \eqref{Q hat} below, which are of particular interest in the context of non-Hermitian random matrix theory.

\subsection{Main results}
For given parameters $\tau \in [0,1)$ and $c \ge 0$, we consider the potential
\begin{equation} \label{Q main}
Q(\zeta):=\frac{1}{1-\tau^2}\Big( |\zeta|^2-\tau \re \zeta^2 \Big)-2c\log|\zeta|. 
\end{equation}
When $\beta=2$, the ensembles \eqref{Gibbs} associated with $Q$ correspond to the distribution of random eigenvalues of the elliptic Ginibre matrices of size $(c+1)N$ conditioned to have zero eigenvalues with multiplicity $cN$. 
We mention that such a model with $c>0$ was also studied in the context of Quantum Chromodynamics \cite{akemann2001microscopic}. 

In \eqref{Q main}, the logarithmic term can be interpreted as an insertion of a point charge, see \cite{akemann2021scaling,charlier2022asymptotics,byun2022characteristic,byun2022almost,byun2022spherical} for recent investigations of the models \eqref{Gibbs} in this situation. 
Such insertion of a point charge has also been studied in the theory of planar orthogonal polynomials \cite{MR3280250,MR3668632,MR3849128,MR3670735,MR3962350,lee2020strong,berezin2022planar}. 
On the other hand, the parameter $\tau \in [0,1)$ captures the non-Hermiticity of the model. 
To be more precise, the models \eqref{Gibbs} associated with $Q$ interpolate the complex/symplectic Ginibre ensembles ($\tau=0$) with the Gaussian Unitary/Symplectic ensembles ($\tau=1$) conditioned to have zero eigenvalues, see Remark~\ref{Rem_MP} for further discussion in relation to our main results.  

For the case $c=0$, the terminology ``elliptic'' comes from the fact that the limiting spectrum $S_{\tau,0}$ is given by the ellipse 
\begin{equation}  \label{S tau 0} 
S_{\tau,0}:=\Big\{ (x,y) \in \R^2 : \Big( \frac{x}{1+\tau} \Big)^2+ \Big( \frac{y}{1-\tau} \Big)^2 \le 1  \Big\},
\end{equation}
which is known as the elliptic law, see e.g. \cite{choquard1987dielectric,forrester1996two}.
We refer to \cite{lee2016fine,akemann2022elliptic,ameur2021almost,molag2022edge,bothner2022complexEdge,bothner2022complexBulk} and references therein for more about the recent progress on the complex elliptic Ginibre ensembles and \cite{MR1928853,akemann2022skew,byun2021universal,byun2021wronskian} for their symplectic counterparts. 
For the rotationally invariant case when $\tau=0,$ it is easy to show that the associated droplet $S_{0,c}$ is given by 
\begin{equation}  \label{S 0 c}
S_{0,c}:=\Big\{ (x,y) \in \R^2 :  c \le x^2+y^2 \le 1+c  \Big\},
\end{equation}
see e.g. \cite[Section \RN{4}.6]{ST97} and \cite[Section 5.2]{byun2022progress}.

\bigskip 

The primary goal of this work is to determine the precise shape of the droplet associated with the potential \eqref{Q main} for general $\tau \in [0,1)$ and $c \ge 0$. 
For this, we set some notations. 
Let us write
\begin{equation} \label{tau critical}
\tau_c:=\frac{1}{1+2c}
\end{equation}
for the critical non-Hermiticity parameter. 
For $\tau \in  (\tau_c,1)$, we define
\begin{equation} \label{f rational main}
f(z)\equiv f_\tau(z) :=  \frac{(1+\tau)(1+2c)}{2}  \frac{(1-az)(z-a\tau)^2}{z(z-a)} , \qquad a=-\frac{1}{ \sqrt{\tau(1+2c)} }. 
\end{equation}
We are now ready to present our main result. 

\begin{thm} \label{Thm_insertion at 0 symmetric}
Let $Q$ be given by \eqref{Q main}. Then the droplet $S \equiv S_{\tau,c}=S_Q$ of the equilibrium measure 
\begin{equation} \label{eq msr Q}
d\mu_Q(z)= \frac{1}{1-\tau^2} \mathbbm{1}_S(z) \,dA(z) 
\end{equation}
is given as follows. 
\begin{itemize}
    \item \textup{\textbf{(Post-critical case)}} If $\tau \in (0, \tau_c]$, we have
\begin{equation} \label{S tau c}
S_{\tau,c}=\Big\{ (x,y) \in \R^2 : \Big( \frac{x}{(1+\tau)\sqrt{1+c}} \Big)^2+ \Big( \frac{y}{(1-\tau)\sqrt{1+c}} \Big)^2 \le 1 \, , \, \frac{x^2+y^2}{(1-\tau^2) c} \ge 1 \Big\}.
\end{equation} 
    \item \textup{\textbf{(Pre-critical case)}} If $\tau \in [\tau_c, 1)$, the droplet $S_{\tau,c}$ is the closure of the interior of the real analytic Jordan curves given by the image of the unit circle with respect to the map $z \mapsto \pm \sqrt{f(z)}$, where $f$ is given by \eqref{f rational main}. 
\end{itemize}
\end{thm}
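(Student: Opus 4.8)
The plan is to verify that the measure $\mu_Q$ in \eqref{eq msr Q}, supported on the set $S$ claimed in each case, satisfies the variational (Euler--Lagrange) characterisation of the equilibrium measure: there is a constant $F_Q$ with $2U^{\mu_Q}+Q=F_Q$ on $S$ and $2U^{\mu_Q}+Q\ge F_Q$ quasi-everywhere on $\mathbb C$, where $U^{\mu}(z)=\int\log|z-w|^{-1}\,d\mu(w)$; by uniqueness in the obstacle problem (see \cite{ST97}) this identifies $S=S_Q$ and $\mu_Q=\mu_W$ with $W=Q$. Since $\Lap Q=\tfrac1{1-\tau^2}$ on $\mathbb C\setminus\{0\}$ and in both cases $0\notin\overline S$ (for $\tau\le\tau_c$ the origin lies in the hole of the doubly connected droplet; for $\tau\ge\tau_c$ it lies between the two components), the potential $Q$ is $C^2$ near $S$, so \eqref{Frostman thm} forces the density $\tfrac1{1-\tau^2}\mathbbm{1}_S$ once $S$ is known. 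Because $2U^{\mu_Q}+Q$ is real-valued, the interior identity on $\Int S$ is equivalent --- using that each component of $\Int S$ is connected and that the Cauchy transform is continuous up to $\partial S$ --- to the single complex equation
\begin{equation*}
C^{\mu_Q}(z):=\frac1{1-\tau^2}\int_S\frac{dA(w)}{z-w}=\partial Q(z)=\frac{\bar z-\tau z}{1-\tau^2}-\frac cz,\qquad z\in\Int S .
\end{equation*}
The exterior inequality is then routine: off $S$ the function $2U^{\mu_Q}+Q$ is subharmonic away from $0$ (harmonic $U^{\mu_Q}$ plus $\Lap Q>0$), equals $F_Q$ on $\partial S$, and tends to $+\infty$ at $0$ (when $c>0$) and at $\infty$, so the minimum principle gives $\ge F_Q$ on $\mathbb C\setminus S$; by the $\zeta\mapsto-\zeta$ symmetry the constant is the same on both components in the pre-critical case. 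Thus everything reduces to exhibiting $S$ of total mass $|S|=\pi(1-\tau^2)$ for which the displayed Cauchy-transform identity holds.

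In the post-critical case $\tau\in(0,\tau_c]$ this is a direct computation. The claimed $S$ is $E\setminus D$ with $E:=\sqrt{1+c}\,S_{\tau,0}$ and $D:=\{|z|^2<(1-\tau^2)c\}$; one checks $D\subset E$ precisely when $\tau\le\tau_c$, with internal tangency at $\tau=\tau_c$ at the points $(0,\pm(1-\tau)\sqrt{1+c})$ --- these are the two merging singular boundary points --- while $|E|-|D|=\pi(1+c)(1-\tau^2)-\pi(1-\tau^2)c=\pi(1-\tau^2)$ gives the mass. From the elliptic law \eqref{S tau 0} and \eqref{Frostman thm} one has $\int_{S_{\tau,0}}(z-w)^{-1}\,dA(w)=\bar z-\tau z$ for $z\in\Int S_{\tau,0}$, hence by rescaling $\int_E(z-w)^{-1}\,dA(w)=\bar z-\tau z$ for $z\in\Int E$; combining with $\int_D(z-w)^{-1}\,dA(w)=(1-\tau^2)c/z$ for $z\notin\overline D$ and dividing by $1-\tau^2$ yields exactly $\partial Q(z)$ on $\Int(E\setminus D)$.

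In the pre-critical case $\tau\in[\tau_c,1)$ I would use the conformal mapping method. The droplet is symmetric under $\zeta\mapsto\pm\zeta$ and $\zeta\mapsto\bar\zeta$, so one may pass to the ``folded'' problem under $\zeta\mapsto\zeta^2$ (this is why $\pm\sqrt{\,\cdot\,}$ appears in the statement): using $\log|\zeta^2-\eta^2|=\log|\zeta-\eta|+\log|\zeta+\eta|$ together with the $\pm$-symmetry, the weighted energy for $Q$ over symmetric measures becomes the weighted energy for the potential $\widetilde W(v):=\tfrac2{1-\tau^2}(|v|-\tau\re v)-2c\log|v|$ over probability measures on the simply connected set $\widetilde S:=\{\zeta^2:\zeta\in S\}$, with $\Lap\widetilde W(v)\propto|v|^{-1}$. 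One represents $\mathbb C\setminus\overline{\widetilde S}$ as the conformal image of $\{|z|>1\}$ under a map $\psi$ with $\psi(\infty)=\infty$ (the point charge forcing $\psi$ to take the value $0$ somewhere in $\{|z|>1\}$); writing out the Cauchy-transform identity for the folded problem and evaluating the boundary integrals by residues turns it into an algebraic identity whose solution is the rational map $\psi=f$ of \eqref{f rational main}, with the pole location $a=-1/\sqrt{\tau(1+2c)}$ and the prefactor pinned down by the simple pole of $f$ at $0$, the behaviour at $\infty$, and the mass normalisation $\pi(1-\tau^2)$. Unfolding, $\partial\widetilde S=\{f(z):|z|=1\}$ gives $\partial S=\{\pm\sqrt{f(z)}:|z|=1\}$ and hence $S$. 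One then checks that $f$ maps $\{|z|>1\}$ univalently onto $\mathbb C\setminus\overline{\widetilde S}$ for all $\tau\in(\tau_c,1)$, so that these are genuine Jordan curves, mutually disjoint and disjoint from $0$; at $\tau=\tau_c$ the pole of $f$ at $z=a$ reaches the unit circle, the two curves touch, and the description matches the post-critical one.

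The main obstacle is this last step: discovering the correct rational ansatz \eqref{f rational main} --- equivalently, solving the Schwarz-function equation for $\widetilde S$ in closed form --- and then verifying the univalence of $f$ on $\{|z|\ge1\}$ throughout the parameter range, together with the geometric facts ($0\notin\overline S$, the two components disjoint) that legitimise the maximum-principle step and the continuity of $C^{\mu_Q}$ across $\partial S$. By contrast, the post-critical case and the exterior estimate amount to bookkeeping.
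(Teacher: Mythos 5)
Your overall strategy (verify the variational conditions \eqref{vari eq_prec}--\eqref{vari ineq_prec} for the candidate measure, compute the interior Cauchy transform via the ellipse-minus-disc decomposition in the post-critical case, and fold by $\zeta\mapsto\zeta^2$ and use the rational conformal map $f$ in the pre-critical case) is the same as the paper's. The interior equality in the post-critical case is handled correctly. But there is a genuine gap in your treatment of the exterior inequality \eqref{vari ineq_prec}: you argue that $H=2U^{\mu_Q}+Q$ is \emph{subharmonic} off $S$ (away from $0$), equals the constant on $\partial S$, blows up at $0$ and $\infty$, and conclude by ``the minimum principle''. Subharmonic functions satisfy a \emph{maximum} principle, not a minimum principle; a subharmonic function can dip strictly below its boundary values even if it tends to $+\infty$ at the punctures and at infinity (e.g.\ $u(z)=|z|^2+2|z|^{-2}-3$ on $\{|z|>1\}$ vanishes on $|z|=1$, tends to $+\infty$ at $\infty$, yet is negative near $|z|=1.1$). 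So this step, which you dismiss as routine bookkeeping, does not follow from any soft maximum-principle argument; it is precisely where the paper does additional work. In the post-critical case the paper checks $\zeta\in S_2$ directly from the explicit disc potential (Lemma~\ref{Lem_disc var}), and for $\zeta\in S_1^c$ argues by contradiction: a failure of \eqref{vari ineq_prec} together with $H\to\infty$ at $\infty$ would produce an interior critical point $\zeta_*\in S_1^c$ with $\partial_\zeta H(\zeta_*)=0$, and then the \emph{exterior} Cauchy transform (the second branch of \eqref{Cauchy ell}, which your rescaled-elliptic-law argument never provides) shows that $\partial_\zeta H=0$ forces $\zeta\in\partial S_1$ --- a contradiction. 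The same mechanism, with a residue computation keeping only the roots of $f(w)=\zeta$ inside $\mathbb{D}$ and a separate argument that the resulting expression can be real only for $w_\zeta\in\partial\mathbb{D}$, is what handles the exterior region in the pre-critical case.

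A secondary remark: in the pre-critical case your text is essentially a plan rather than a proof --- the closed form of $f$ (pole at $a=-1/\sqrt{\tau(1+2c)}$, coefficients as in \eqref{f rational main}), the residue verification of the interior identity for the folded potential $\widehat{Q}$, and the univalence of $f$ on $\{|z|\ge 1\}$ are exactly the substantive steps (Subsection~\ref{Subsec_pre} and Appendix~\ref{Appendix conformal mapping} of the paper), and you correctly flag them as outstanding. The folding step itself, which you justify via $\log|\zeta^2-\eta^2|=\log|\zeta-\eta|+\log|\zeta+\eta|$, matches the paper's use of \eqref{droplet relations}. But as written, the proposal cannot be accepted: the exterior inequality needs the explicit exterior Cauchy transforms and the critical-point contradiction argument (or some equivalent), not a minimum principle for a subharmonic function.
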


\begin{figure}[h!]
	\begin{subfigure}{0.32\textwidth}
		\begin{center}	
			\includegraphics[width=\textwidth]{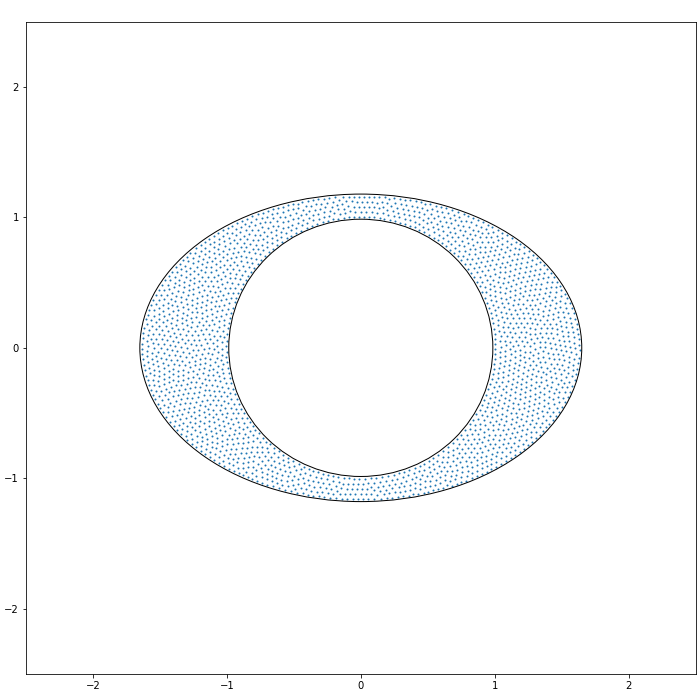}
		\end{center}
		\subcaption{$\tau=1/6<\tau_c$}
	\end{subfigure}	
	\begin{subfigure}{0.32\textwidth}
		\begin{center}	
			\includegraphics[width=\textwidth]{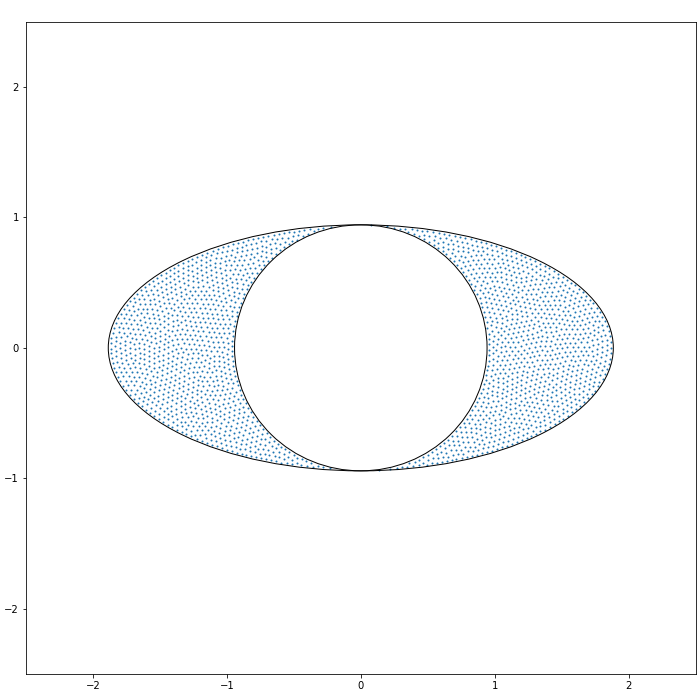}
		\end{center}
		\subcaption{$\tau=1/3=\tau_c$}
	\end{subfigure}	
	\begin{subfigure}{0.32\textwidth}
		\begin{center}	
			\includegraphics[width=\textwidth]{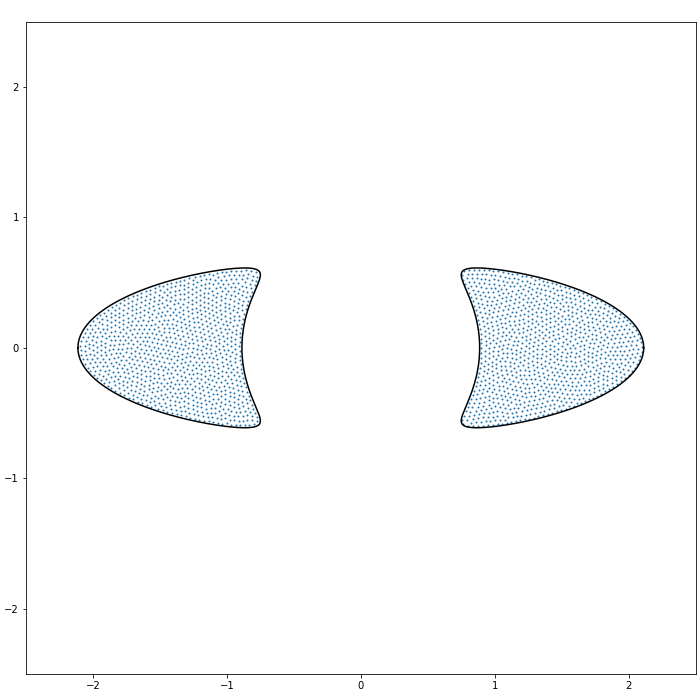}
		\end{center}
		\subcaption{$\tau=1/2>\tau_c$}
	\end{subfigure}	
	\caption{ The droplet $S_{\tau,c}$, where $c=1$ and a Fekete point configuration with $N=2048$. } \label{Fig_droplet}
\end{figure}

Note that if $c=0$ (resp., $\tau=0$), the droplet \eqref{S tau c} corresponds to \eqref{S tau 0} (resp., \eqref{S 0 c}). 
We mention that the post-critical case of Theorem~\ref{Thm_insertion at 0 symmetric} is indeed shown in a more general setup, see \eqref{Q p general} and Proposition~\ref{Thm_prec} below.

\begin{rem}[Phase transition of the droplet]
In Theorem~\ref{Thm_insertion at 0 symmetric}, we observe that if $c>0$, the topology of the droplet reveals a phase transition.
Namely, for the post-critical case when $\tau< \tau_c$, the droplet is a doubly connected domain, whereas for the pre-critical case $\tau > \tau_c$, it consists of two disconnected components. 
At criticality when $\tau=\tau_c$, the droplet contains two symmetric double points. 
We refer to \cite{MR3280250,balogh2015equilibrium, MR4229527,criado2022vector} for further models whose droplets reveal various phase transitions.
Let us also mention that recently, there have been several works on the models \eqref{Gibbs} with multi-component droplets, see e.g. \cite{MR3668632,byun2022determinantal,ameur2022two,ameur2022disk}.  
In this pre-critical regime, some theta-function oscillations are expected to appear for various kinds of statistics; cf. \cite{charlier2021large,ameur2022two}. 
The precise asymptotic behaviours of the partition function would also be interesting in connection with the conjecture that these depend on the Euler index of the droplets, see \cite{jancovici1994coulomb,byun2022partition} and \cite[Sections 4.1 and 5.3]{byun2022progress} for further discussion. 
\end{rem}

\begin{rem}[Fekete points and numerics]
A configuration $\{\zeta_j\}_{j=1}^N$ which makes the Hamiltonians \eqref{Ham C} or \eqref{Ham H} minimal is known as a Fekete configuration. 
This can be interpreted as the ensembles \eqref{Gibbs} with low temperature limit $\beta=\infty$, see e.g. \cite{sandier2012ginzburg,nodari2015renormalized,Ameur18low,ameur2022planar} and references therein. 
Since the droplet is independent of the inverse temperature $\beta>0$ (excluding the high-temperature regime \cite{MR3962973} when $\beta=O(1/N)$), the Fekete points are useful to numerically observe the shape of the droplets.  
In Figures~\ref{Fig_droplet} and \ref{Fig_droplet NW}, Fekete configurations associated with the Hamiltonian \eqref{Ham C} are also presented, which show good fits with Theorems~\ref{Thm_insertion at 0 symmetric} and \ref{Thm_insertion at 0}.  
\end{rem}

Notice that the potential \eqref{Q main} and the droplet $S_{\tau,c}$ are invariant under the map $\zeta \mapsto -\zeta$. 
We now discuss an equivalent formulation of Theorem~\ref{Thm_insertion at 0 symmetric} under the removal of such symmetry.
(See \cite[Section 1.3]{criado2022vector} for a similar discussion in a vector equilibrium problem on a sphere with point charges.)
The motivation for this formulation will be clear in the next subsection. 

For this purpose, we denote 
\begin{equation} \label{Q hat}
\widehat{Q}(\zeta):=\frac{2}{1-\tau^2}\Big( |\zeta|-\tau \re \zeta \Big)-2c\log|\zeta|.
\end{equation}
By definition, the potentials $Q$ in \eqref{Q main} and $\widehat{Q}$ in \eqref{Q hat} are related as 
\begin{equation}
Q(\zeta)=\frac12 \widehat{Q}(\zeta^2).
\end{equation}
Denoting by $\widehat{S}$ the droplet associated with $\widehat{Q}$, it follows from \cite[Lemma 1]{balogh2015equilibrium} that
\begin{equation}  \label{droplet relations}
S= \{ \zeta \in \mathbb{C}: \zeta^2 \in \widehat{S} \}.  
\end{equation}
Due to the relation \eqref{droplet relations} and 
\begin{equation}
\Delta \widehat{Q}(\zeta)=\frac{1}{2(1-\tau^2)} \frac{1}{|\zeta|},
\end{equation}
we have the following equivalent formulation of Theorem~\ref{Thm_insertion at 0 symmetric}. 

\begin{thm} \label{Thm_insertion at 0}
Let $\widehat{Q}$ be given by \eqref{Q hat}. Then the droplet $\widehat{S} \equiv \widehat{S}_{\tau,c}=S_{\widehat{Q}}$ of the equilibrium measure 
\begin{equation}   \label{eq msr Q hat}
d\mu_{\widehat{Q}}(\zeta)= \frac{1}{2(1-\tau^2)} \frac{1}{|\zeta|} \mathbbm{1}_{ \widehat{S} }(\zeta) \,dA(\zeta) 
\end{equation}
is given as follows. 
\begin{enumerate}[label=(\roman*)]
    \item \textup{\textbf{(Post-critical case)}} If $\tau \in (0, \tau_c]$, we have
\begin{equation}
\widehat{S}_{\tau,c}=\Big\{ (x,y) \in \R^2 : \Big( \frac{x-2\tau(1+c)}{(1+\tau^2)(1+c)} \Big)^2+ \Big( \frac{y}{(1-\tau^2)(1+c)} \Big)^2 \le 1 \, , \,  \frac{x^2+y^2}{ (1-\tau^2)^2c^2 } \ge 1 \Big\}.
\end{equation} 
    \item \textup{\textbf{(Pre-critical case)}} If $\tau \in [\tau_c, 1)$, the droplet $\widehat{S}_{\tau,c}$ is the closure of the interior of the real analytic Jordan curve given by the image of the unit circle with respect to the rational map $z \mapsto f(z)$, where $f$ is given by \eqref{f rational main}.  
\end{enumerate}
\end{thm}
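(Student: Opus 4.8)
The plan is to obtain Theorem~\ref{Thm_insertion at 0} as a direct consequence of Theorem~\ref{Thm_insertion at 0 symmetric} together with the quadratic change of variables $\pi(\zeta)=\zeta^{2}$ already used in the text. Since $Q(\zeta)=\tfrac12\widehat Q(\zeta^{2})$, the lemma of \cite{balogh2015equilibrium} that produced \eqref{droplet relations} in fact asserts both that $\widehat S=\pi(S)$ and that $\mu_{\widehat Q}$ is the push-forward $\pi_{*}\mu_{Q}$. Hence it suffices to transport the density formula and the explicit shape of $S=S_{\tau,c}$ furnished by Theorem~\ref{Thm_insertion at 0 symmetric} through $\pi$, which is entirely elementary.

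First, the density. The map $\pi$ is holomorphic with real Jacobian $|\pi'(\zeta)|^{2}=4|\zeta|^{2}=4|\pi(\zeta)|$, and, because $S$ is invariant under $\zeta\mapsto-\zeta$ and (when $c>0$) omits the origin, $\pi$ restricts to a two-to-one covering $S\to\widehat S$. Pushing $\tfrac{1}{1-\tau^{2}}\mathbbm{1}_{S}\,dA$ forward therefore produces the Jacobian weight $\tfrac{1}{2|w|}$ and yields exactly $\tfrac{1}{2(1-\tau^{2})|w|}\mathbbm{1}_{\widehat S}\,dA$, in agreement with Frostman's theorem \eqref{Frostman thm} applied to $\widehat Q$ and the value $\Lap\widehat Q(\zeta)=\tfrac{1}{2(1-\tau^{2})|\zeta|}$. (When $c=0$ the origin lies in the interior of $\widehat S$ and $\widehat Q$ fails to be $C^{2}$ there, but $\Lap\widehat Q\in L^{1}_{\mathrm{loc}}$ and the push-forward identity persists by approximation, so the formula is unaffected.)

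Next, the droplet. In the pre-critical range $\tau\in[\tau_c,1)$ it is immediate: the boundary of $S$ is the image of the unit circle under the two branches $z\mapsto\pm\sqrt{f(z)}$, so $\pi$ maps each component of $S$ biholomorphically onto the region enclosed by $f(\{|z|=1\})$ and maps $\partial S$ two-to-one onto that curve (a Jordan curve for $\tau>\tau_c$); hence $\widehat S$ is the closure of the interior of the rational image $f(\{|z|=1\})$ with $f$ as in \eqref{f rational main}. In the post-critical range $\tau\in(0,\tau_c]$ I would substitute $w=u+iv=\zeta^{2}$, $\zeta=x+iy$, so that $|w|=x^{2}+y^{2}$ and $u=x^{2}-y^{2}$, hence $x^{2}=\tfrac{|w|+u}{2}$ and $y^{2}=\tfrac{|w|-u}{2}$, into the two inequalities of \eqref{S tau c}. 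The inner constraint $|\zeta|^{2}\ge(1-\tau^{2})c$ turns into $|w|\ge(1-\tau^{2})c$, i.e.\ $\tfrac{|w|^{2}}{(1-\tau^{2})^{2}c^{2}}\ge1$, which is the inner constraint of Theorem~\ref{Thm_insertion at 0}(i). Clearing denominators in the elliptic constraint and using $(1+\tau)^{2}+(1-\tau)^{2}=2(1+\tau^{2})$ and $(1+\tau)^{2}-(1-\tau)^{2}=4\tau$ turns it into the linear inequality
\begin{equation*}
(1+\tau^{2})\,|w|-2\tau u\;\le\;(1+c)(1-\tau^{2})^{2},
\end{equation*}
which, since $(1+c)(1-\tau^{2})^{2}+2\tau u\ge0$ on the relevant region, may be squared reversibly; completing the square in $u$ and invoking the identities $(1+\tau^{2})^{2}-4\tau^{2}=(1-\tau^{2})^{2}$ and $(1-\tau^{2})^{2}+4\tau^{2}=(1+\tau^{2})^{2}$ then gives precisely $\big(\tfrac{u-2\tau(1+c)}{(1+\tau^{2})(1+c)}\big)^{2}+\big(\tfrac{v}{(1-\tau^{2})(1+c)}\big)^{2}\le1$. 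It remains only to check the elementary geometric facts that the origin lies inside this ellipse (true for all $\tau<1$) and that for $\tau\le\tau_c$ the circle $|w|=(1-\tau^{2})c$ lies inside it, internally tangent exactly at $\tau=\tau_c$; this identifies $\widehat S$ as the closed region between the two curves, the degeneracy at criticality being the image of the two symmetric double points of $S$.

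In this route there is no genuine obstacle: all the analytic content sits in Theorem~\ref{Thm_insertion at 0 symmetric}, and what remains is bookkeeping for $\pi$, the only delicate points being the $c=0$ degeneration noted above and the verification that $\pi$ carries $\partial S$ onto $\partial\widehat S$ and $\Int S$ onto $\Int\widehat S$ without folding, which holds because $S$ is $\zeta\mapsto-\zeta$-symmetric and, for $c>0$, omits the origin, so $\pi|_{S}$ is a proper two-to-one local homeomorphism. If instead one wished to prove Theorem~\ref{Thm_insertion at 0} directly (and then recover Theorem~\ref{Thm_insertion at 0 symmetric} by pulling back along $\zeta\mapsto\zeta^{2}$), one would, in the pre-critical case, posit that the conformal map of the exterior unit disk onto the exterior of $\widehat S$ is the rational $f$ of \eqref{f rational main}, with $a$ forced by the normalization of $\mu_{\widehat Q}$ and the point charge $c$, and then compute; there the real work — and the main obstacle — would be proving that $f$ is univalent on $\{|z|>1\}$ throughout $\tau\in(\tau_c,1)$ and that the induced density $\Lap\widehat Q$ is nonnegative on the enclosed region.
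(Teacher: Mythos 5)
Your treatment of the post-critical case (i) is correct and is essentially the paper's own route: the paper proves the post-critical statement on the symmetric side (Proposition~\ref{Thm_prec}, specialised to $p=0$, gives Theorem~\ref{Thm_insertion at 0 symmetric}~(i)) and then reads off Theorem~\ref{Thm_insertion at 0}~(i) through \eqref{droplet relations} together with $\Delta\widehat Q(\zeta)=\tfrac{1}{2(1-\tau^2)}\tfrac{1}{|\zeta|}$ and Frostman's formula \eqref{Frostman thm}; your substitution $u=x^{2}-y^{2}$, $|w|=x^{2}+y^{2}$ and the reversible squaring simply make explicit the coordinate bookkeeping that the paper leaves implicit when it calls Theorem~\ref{Thm_insertion at 0} an ``equivalent formulation''. (A minor caveat: the paper cites the lemma of \cite{balogh2015equilibrium} only for the droplet relation \eqref{droplet relations}, not for a push-forward identity of the equilibrium measures, but this is harmless since the density is forced by \eqref{Frostman thm} once the droplet is known.)

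The pre-critical case (ii), however, contains a genuine gap: your derivation is circular relative to the paper's logical structure. In the paper, Theorem~\ref{Thm_insertion at 0 symmetric}~(ii) has no independent proof; the pre-critical analysis is carried out entirely on the hatted side (Subsection~\ref{Subsec_pre}), where the variational conditions \eqref{vari eq_prec} and \eqref{vari ineq_prec} are verified directly for $W=\widehat Q$ with the candidate droplet bounded by $f(\partial\mathbb{D})$, $f$ as in \eqref{f rational main}: the Cauchy transform of the candidate measure is evaluated by the change of variable $z=f(w)$ and residue calculus at $0$, $a$ and the three preimages $w_\zeta^{(j)}$, the equality is checked inside $\widehat S$, and a would-be critical point of the obstacle function outside $\widehat S$ is excluded by showing it would have to lie on $\partial\widehat S$; the candidate $f$ itself is produced by the Schwarz-function/conformal-mapping argument of Appendix~\ref{Appendix conformal mapping}. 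Theorem~\ref{Thm_insertion at 0 symmetric}~(ii) is then deduced as a corollary via \eqref{droplet relations}, i.e.\ in exactly the opposite direction to the one you propose. So invoking Theorem~\ref{Thm_insertion at 0 symmetric}~(ii) to prove Theorem~\ref{Thm_insertion at 0}~(ii) presupposes the very statement whose only proof is the one being asked for; moreover, the paper's ``removal of symmetry'' remark explains why the work cannot be relocated to the symmetric side, since in the pre-critical regime $S_{\tau,c}$ has two components and the conformal-mapping construction is unavailable there. Your closing sentence gestures at the direct route, but identifies the obstacle as univalence of $f$ and positivity of the density, whereas the essential missing content is the verification of the variational equality and inequality for $\widehat Q$ against the candidate measure; as written, the proposal defers precisely the analytic work that constitutes the paper's proof of part (ii).
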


\begin{figure}[h!]
	\begin{subfigure}{0.32\textwidth}
		\begin{center}	
			\includegraphics[width=\textwidth]{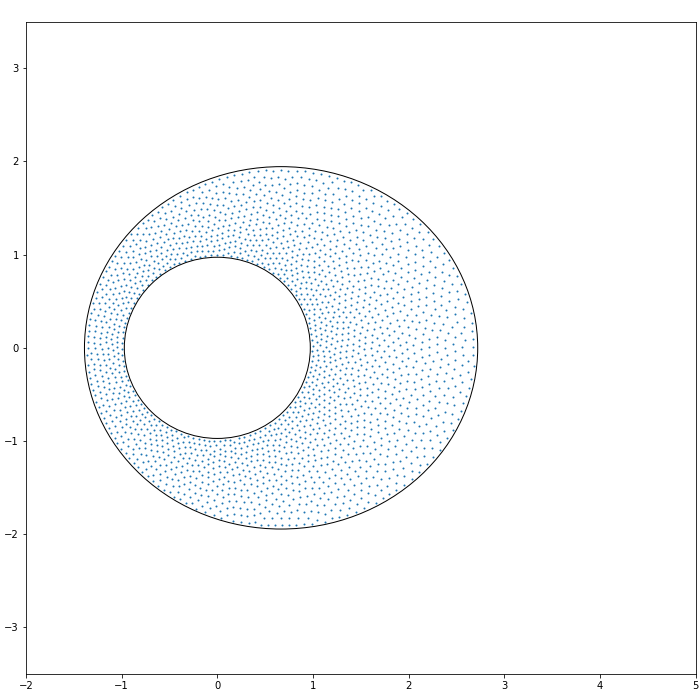}
		\end{center}
		\subcaption{$\tau=1/6<\tau_c$}
	\end{subfigure}	
	\begin{subfigure}{0.32\textwidth}
		\begin{center}	
			\includegraphics[width=\textwidth]{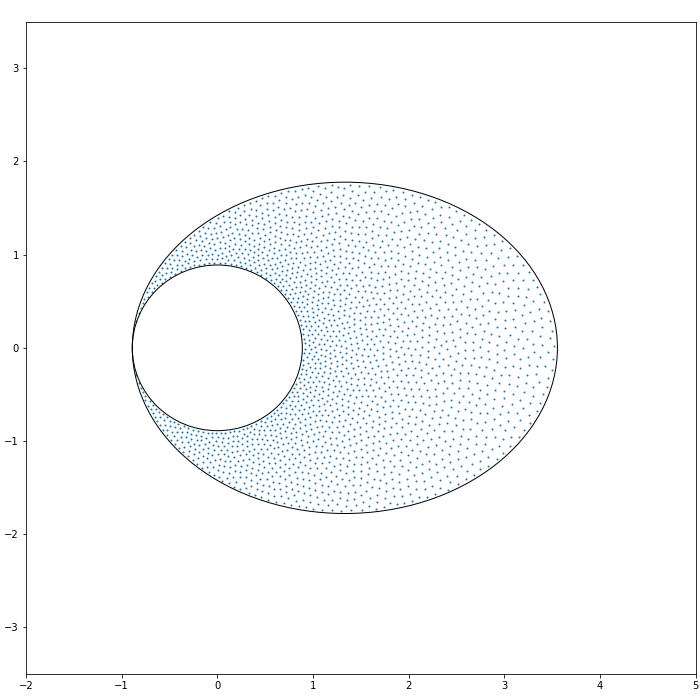}
		\end{center}
		\subcaption{$\tau=1/3=\tau_c$}
	\end{subfigure}	
	\begin{subfigure}{0.32\textwidth}
		\begin{center}	
			\includegraphics[width=\textwidth]{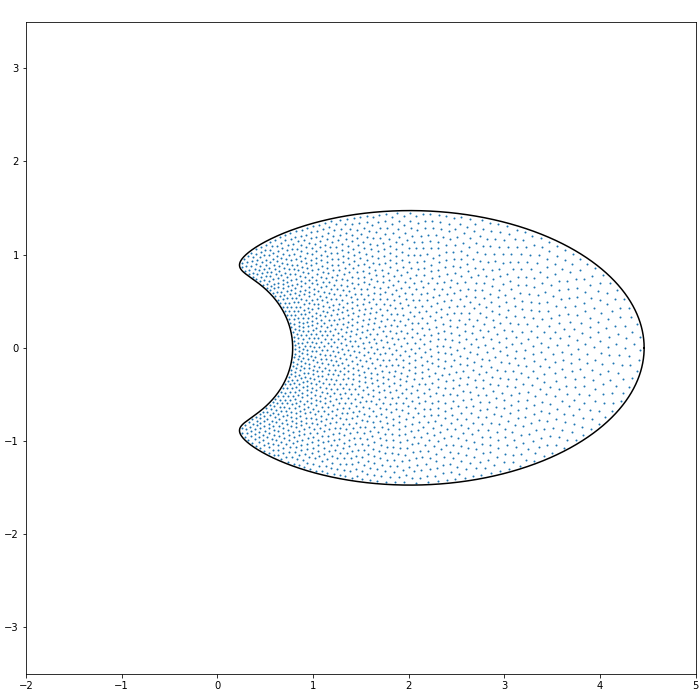}
		\end{center}
		\subcaption{$\tau=1/2>\tau_c$}
	\end{subfigure}	
	\caption{ The droplet $\widehat{S}_{\tau,c}$, where $c=1$ and a Fekete point configuration with $N=2048$.  } \label{Fig_droplet NW}
\end{figure}

\begin{rem}[Joukowsky transform in the critical case]
If $\tau=\tau_c$ with \eqref{tau critical}, we have $a=1/a=-1$.
Thus in this critical case, the rational function $f_\tau$ in \eqref{f rational main} is simplified as
\begin{equation} \label{f rational critical}
f_{\tau_c}(z)=(1+c) \frac{(z+\tau)^2}{z}= (1+c) \Big( z+ 2\tau + \frac{\tau^2}{z} \Big).
\end{equation}
Note that compared to the general case \eqref{f rational main}, there is one less zero and one less pole in \eqref{f rational critical}.  
Indeed, in the critical case, the rational map $f_{\tau_c}$ is a (shifted) Joukowsky transform
\begin{equation}
f_{\tau_c}: \mathbb{D}^c \to  \Big\{ (x,y) \in \R^2 : \Big( \frac{x-2\tau(1+c)}{(1+\tau^2)(1+c)} \Big)^2+ \Big( \frac{y}{(1-\tau^2)(1+c)} \Big)^2 \ge  1  \Big\}.
\end{equation}
In  \cite{MR4229527}, a similar type of Joukowsky transform was used to solve an equilibrium measure problem. 
For the models under consideration in the present work, due to a more complicated form of the rational function \eqref{f rational main}, the required analysis for the associated equilibrium problem turns out to be more involved.
\end{rem}

\begin{rem}[A non-Hermitian extension of the Marchenko-Pastur distribution] \label{Rem_MP}
In the Hermitian limit $\tau \uparrow 1,$ by \eqref{Q main} and \eqref{Q hat}, we have 
\begin{align}
&\lim_{ \tau \uparrow 1 } Q(x+iy)=V(x):=\begin{cases}
\displaystyle \frac{x^2}{2}-2c \log|x|, &\text{if }y=0,
\smallskip 
\\
+\infty &\text{otherwise}, 
\end{cases}
\\
& \lim_{ \tau \uparrow 1 } \widehat{Q}(x+iy)=\widehat{V}(x):=\begin{cases}
\displaystyle x-2c \log|x|, &\text{if }y=0, \, x>0,
\smallskip 
\\
+\infty &\text{otherwise}. 
\end{cases} 
\end{align}
Then the associated equilibrium measures are given by the well-known Marchenko-Pastur law (with squared variables) \cite[Proposition 3.4.1]{forrester2010log}, i.e. 
\begin{align}
d\mu_{ V }(x) & =  \frac{1 }{2\pi|x|}\,\sqrt{ (\lambda_+^2-x^2)(x^2-\lambda_-^2) }\cdot \mathbbm{1}_{ [-\lambda_+,-\lambda_- ] \cup [\lambda_-, \lambda_+ ] } \,dx ,  \label{MP square}
\\
d\mu_{ \widehat{V} }(x) & =  \frac{1 }{ 2\pi x}\,\sqrt{ (\lambda_+^2-x)(x-\lambda_-^2) }\cdot \mathbbm{1}_{ [\lambda_-^2, \lambda_+^2 ] } \,dx , \label{MP}
\end{align}
where $\lambda_{\pm}:= \sqrt{2c+1} \pm 1 $, cf. Remark~\ref{Rem_eq msr 1D}.	
Therefore one can interpret Theorem~\ref{Thm_insertion at 0 symmetric} (resp., Theorem~\ref{Thm_insertion at 0}) as a non-Hermitian generalisation of the Marchenko-Pastur distribution \eqref{MP square} (resp., \eqref{MP}), see \cite[Section 2]{ameur2021almost} for more about the geometric meaning with the notion of the statistical cross-section.  
We also refer to \cite{MR4229527} for another non-Hermitian extension of \eqref{MP square} and \eqref{MP} in the context of the chiral Ginibre ensembles. 
\end{rem}

\begin{rem}[Inclusion relations of the droplets]
Let us write
 \begin{align}
 S_1= \Big\{ (x,y)\in \R^2 : \Big( \frac{x}{1+\tau} \Big)^2+ \Big( \frac{y}{1-\tau} \Big)^2 \le 1+c\Big\}, \qquad    S_2:=\Big\{ (x,y)\in \R^2 : x^2+y^2 \le (1-\tau^2)c \Big\}
 \end{align}
 and denote by $\widehat{S}_{j}$ $(j=1,2)$ the image of $S_j$ under the map $z \mapsto z^2$. 
Then it follows from the definition \eqref{tau critical} that 
\begin{equation}
\tau \in (0,\tau_c) \qquad \textup{if and only if} \qquad S_1^c \cap S_2=\emptyset. 
\end{equation}

By Theorems~\ref{Thm_insertion at 0 symmetric} and \ref{Thm_insertion at 0}, for general $\tau \in [0,1)$ and $c \ge 0$, one can observe that 
\begin{equation} \label{inclusion principle}
S_{\tau,c} \subseteq S_1 \cap (\textup{Int} \, S_2)^c, \qquad  \widehat{S}_{\tau,c} \subseteq  \widehat{S}_1 \cap (\textup{Int}\, \widehat{S}_2)^c.
\end{equation}
Here, equality in \eqref{inclusion principle} holds if and only if in the post-critical case. 
(This property holds in a more general setup, see Proposition~\ref{Thm_prec}.)
On the other hand, in the pre-critical case one can interpret that the particles in $S_1^c \cap S_2$ are smeared out to $S_1 \cap S_2^c$ which makes the inclusion relations \eqref{inclusion principle} strictly hold, see Figure~\ref{Fig_droplet inclusion}.

\begin{figure}[h!]
	\begin{subfigure}{0.48\textwidth}
		\begin{center}	
			\includegraphics[width=\textwidth]{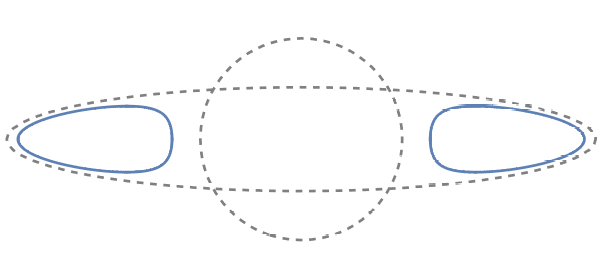}
		\end{center}
		\subcaption{$S_{\tau,c}$}
	\end{subfigure}	
	\begin{subfigure}{0.48\textwidth}
		\begin{center}	
			\includegraphics[width=\textwidth]{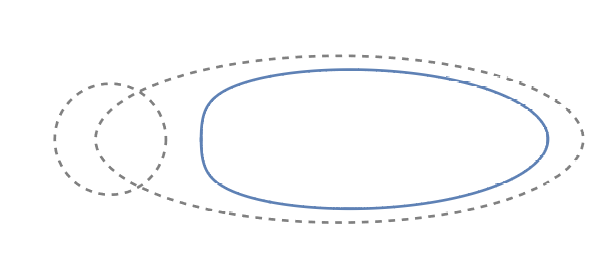}
		\end{center}
		\subcaption{$\widehat{S}_{\tau,c}$}
	\end{subfigure}	
	\caption{ The droplets $S_{\tau,c}$ and $\widehat{S}_{\tau,c}$ in the pre-critical case, where $c=2$ and $\tau=0.7>\tau_c$. Here, the dashed lines display the boundaries of $S_j$ and $\widehat{S}_j$ ($j=1,2$). }  \label{Fig_droplet inclusion}
\end{figure}

\end{rem}

\subsection{Outline of the proof}

Recall that $\mu_W$ is a unique minimiser of the energy \eqref{energy}. 
It is well known that the equilibrium measure $\mu_W$ is characterised by the variational conditions (see \cite[p.27]{ST97})
\begin{align}
\label{vari eq_prec}
\int \log \frac{1}{|\zeta-z|^2}\,d\mu_W(z)+W(\zeta)=C, \quad \text{q.e.} \quad \text{if }\zeta \in S_W; 
\smallskip 
\\
\int \log \frac{1}{|\zeta-z|^2}\,d\mu_W(z)+W(\zeta)\ge C, \quad \text{q.e.} \quad \text{if }\zeta \notin S_W. \label{vari ineq_prec}
\end{align} 
Here, q.e. stands for quasi-everywhere. 
(Nevertheless, this notion is not important in the sequel as we will show that for the models we consider the conditions \eqref{vari eq_prec} and \eqref{vari ineq_prec} indeed hold everywhere.)

Due to the uniqueness of the equilibrium measure, all we need to show is that if $W=Q$, then $\mu_Q$ in \eqref{eq msr Q} satisfies the variational principles \eqref{vari eq_prec} and \eqref{vari ineq_prec}. 
Equivalently, by \eqref{droplet relations}, it also suffices to show the variational principles for the equilibrium measure $\mu_{ \widehat{Q} }$ in \eqref{eq msr Q hat}. 

However, it is far from being obvious to obtain the ``correct candidate'' of the droplets.
Perhaps one may think that at least for the post-critical case, the shape of the droplet \eqref{S tau c} is quite natural given the well-known cases \eqref{S tau 0} and \eqref{S 0 c} as well as the fact that the area of $S_{\tau,c}$ should be $(1-\tau^2)\pi$.   
On the other hand, for the pre-critical case, one can easily notice that there is some secret behind deriving the explicit formula of the rational function \eqref{f rational main}. 
To derive the correct candidate, we use the conformal mapping method with the help of the Schwarz function, see Appendix~\ref{Appendix conformal mapping}.

\begin{rem}[Removal of symmetry]
We emphasise that the conformal mapping method does not work for the multi-component droplet, i.e. the pre-critical case of Theorem~\ref{Thm_insertion at 0 symmetric}. 
This is essentially due to the lack of the Riemann mapping theorem.
Nevertheless, one can observe that once we remove the symmetry $\zeta \mapsto -\zeta$, the droplet in the pre-critical case of Theorem~\ref{Thm_insertion at 0} is simply connected.  
This explains the reason why we need the idea of removing symmetry. 
\end{rem}

The rest of this paper is organised as follows. 

\begin{itemize}
    \item In Section~\ref{Section_proof}, we prove Theorems~\ref{Thm_insertion at 0 symmetric} and \ref{Thm_insertion at 0}. 
    In Subsection~\ref{Subsec_post}, we show the post-critical case of Theorem~\ref{Thm_insertion at 0 symmetric} in a more general setup, see Proposition~\ref{Thm_prec}. 
    On the other hand, in Subsection~\ref{Subsec_pre}, we show the pre-critical case of Theorem~\ref{Thm_insertion at 0}. 
    Then by the relation \eqref{droplet relations}, these complete the proof of our main results. 
    \smallskip 
    \item  This article contains two appendices. 
    In Appendix~\ref{Appendix conformal mapping}, we explain the conformal mapping method to derive the ``correct candidate'' of the droplets.
    In Appendix~\ref{Appendix_Hermitian}, we present a way to solve a one-dimensional equilibrium problem in Remark~\ref{Rem_eq msr 1D}, which shares a common feature with the conformal mapping method. 
    These appendices are made only for instructive purposes and the readers who only want the proof of the main theorems may stop at the end of Section~\ref{Section_proof}.
\end{itemize}

\section{Proof of main theorem} \label{Section_proof}

In this section, we show Theorems~\ref{Thm_insertion at 0 symmetric} and \ref{Thm_insertion at 0}.

\subsection{Post critical cases} \label{Subsec_post}

Extending \eqref{Q main}, we consider the potential 
\begin{equation} \label{Q p general}
Q_p(\zeta):=\frac{1}{1-\tau^2}\Big( |\zeta|^2-\tau \re \zeta^2 \Big)-2c\log|\zeta-p|,  \qquad p \in \mathbb{C}. 
\end{equation}
For the case $\tau=0$, the shape of the droplet associated with the potential \eqref{Q p general} was fully characterised in \cite{MR3280250}. 
(In this case, it suffices to consider the case $p \ge 0$ due to the rotational invariance.)
In particular, it was shown that if 
\begin{equation} \label{c p condition tau=0}
	c < \frac{(1-p^2)^2}{4p^2}, \qquad \tau=0, \qquad p \ge 0, 
\end{equation}
the droplet is given by $S=\overline{\mathbb{D}(0,\sqrt{1+c})} \setminus \mathbb{D}(p,\sqrt{c}),$ where $\mathbb{D}(p,R)$ is the disc with centre $p$ and radius $R$, cf. see Remark~\ref{Rem_post cricial tau0} for the other case $c>(1-p^2)^2/(4p^2)$.

To describe the droplets associated with $Q_p$, we denote 
\begin{equation} \label{S1 Qp}
S_1:= \Big\{ (x,y)\in \R^2 : \Big( \frac{x}{1+\tau} \Big)^2+ \Big( \frac{y}{1-\tau} \Big)^2 \le 1+c\Big\}
\end{equation}
and 
\begin{equation} \label{S2 Qp}
S_2:=\Big\{ (x,y)\in \R^2 : (x-\re p)^2+(y-\im p)^2 \le (1-\tau^2)c \Big\}.
\end{equation}
Then we obtain the following. 

\begin{prop} \label{Thm_prec}
Suppose that the parameters $\tau, c \in \R$ and $p \in \mathbb{C}$ are given to satisfy  
\begin{equation} \label{prec_con}
 S_2 \subset S_1, 
\end{equation}
where $S_1$ and $S_2$ are given by \eqref{S1 Qp} and \eqref{S2 Qp}. 
Then the droplet $S \equiv S_{Q_p}$ associated with \eqref{Q p general} is given by
\begin{equation}
S=S_1 \cap (\textup{Int}\,S_2)^c. 
\end{equation} 
\end{prop}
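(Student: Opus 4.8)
The plan is to verify the Euler--Lagrange conditions \eqref{vari eq_prec}--\eqref{vari ineq_prec} directly for the candidate measure supported on $S = S_1 \cap (\Int S_2)^c$. By uniqueness of the equilibrium measure, this suffices. First I would record that on $S$ the candidate density must be $\Delta Q_p = \frac{1}{1-\tau^2}$ away from $p$ (the point charge contributes no Lebesgue density), so $d\mu = \frac{1}{1-\tau^2}\mathbbm{1}_S\,dA$; one checks the total mass is $\frac{1}{1-\tau^2}(\operatorname{Area}(S_1) - \operatorname{Area}(S_2))/\pi = \frac{1}{1-\tau^2}\big((1-\tau^2)(1+c) - (1-\tau^2)c\big) = 1$, using $\operatorname{Area}(S_1) = \pi(1+\tau)(1-\tau)(1+c)$ and $\operatorname{Area}(S_2) = \pi(1-\tau^2)c$. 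The hypothesis \eqref{prec_con} that $S_2 \subset S_1$ is exactly what makes $S$ have the right mass and be a genuine ``ellipse with an elliptical/circular hole'' rather than something degenerate.

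The core computation is the logarithmic potential $U^\mu(\zeta) = \int \log|\zeta - z|^{-2}\,d\mu(z)$. I would split $\mu$ using inclusion--exclusion: let $\nu_1 = \frac{1}{1-\tau^2}\mathbbm{1}_{S_1}\,dA$ and $\nu_2 = \frac{1}{1-\tau^2}\mathbbm{1}_{S_2}\,dA$, so that $\mu = \nu_1 - \nu_2$ as measures (this is where $S_2 \subset S_1$ is used again). Then $U^\mu = U^{\nu_1} - U^{\nu_2}$. The potential $U^{\nu_1}$ of the uniform-type measure on the ellipse $S_1$ is classical: inside $S_1$ it equals a quadratic polynomial in $\re\zeta, \im\zeta$ — precisely matching $-\frac{1}{1-\tau^2}(|\zeta|^2 - \tau\re\zeta^2)$ up to an additive constant — which is the known fact behind the elliptic law \eqref{S tau 0}; outside $S_1$ it is harmonic. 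The potential $U^{\nu_2}$ of the uniform measure on the disc $S_2 = \mathbb{D}(p,\sqrt{(1-\tau^2)c})$ is the textbook Newtonian potential of a ball: it equals $2c\log|\zeta - p|^{-1} + \text{const}$ outside $S_2$ (Newton's theorem) and a constant-plus-quadratic inside $S_2$. Adding $Q_p$: on $S = S_1 \setminus \Int S_2$ we are outside $S_2$, so $U^{\nu_2}(\zeta) + (-2c\log|\zeta-p|)$-part of $Q_p$ cancel to a constant, while $U^{\nu_1}(\zeta)$ cancels the quadratic part of $Q_p$ to a constant; hence $U^\mu(\zeta) + Q_p(\zeta) \equiv C$ on $S$, giving \eqref{vari eq_prec}.

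For the inequality \eqref{vari ineq_prec} I would treat the two pieces of $S^c$ separately. On $\Int S_2$: here $U^{\nu_1}$ is still the interior quadratic (since $S_2 \subset S_1$), which with the quadratic part of $Q_p$ gives a constant; the remaining piece is $U^{\nu_2}(\zeta) - 2c\log|\zeta-p|$, and inside the disc $U^{\nu_2}$ is a downward paraboloid in $|\zeta-p|$ while $-2c\log|\zeta-p| \to +\infty$, so their sum, being superharmonic off $p$ and blowing up at $p$, exceeds its boundary value $C$ — a maximum-principle argument. On $S_1^c$: now $U^{\nu_1}(\zeta)$ is harmonic and $U^{\nu_2}(\zeta) = 2c\log|\zeta-p|^{-1}+\text{const}$ (outside $S_2$, automatic since $S_2 \subset S_1$), so $U^\mu + Q_p$ again reduces the problematic log terms to a constant and leaves $U^{\nu_1}(\zeta) + \frac{1}{1-\tau^2}(|\zeta|^2 - \tau\re\zeta^2)$, a function that is subharmonic (its Laplacian is $\frac{4}{1-\tau^2} > 0$ in the real-Laplacian normalization, minus the contribution already accounted for) — more carefully, the difference between the exterior harmonic extension of $U^{\nu_1}$ and the interior quadratic is nonnegative on $S_1^c$, which follows because this difference vanishes on $\partial S_1$, has the right sign of Laplacian, and grows at infinity.

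The main obstacle I anticipate is making the exterior estimate on $S_1^c$ fully rigorous: one must show the globally defined function $g(\zeta) := U^{\nu_1}(\zeta) + \frac{1}{1-\tau^2}(|\zeta|^2 - \tau\re\zeta^2) - C$ is $\ge 0$ outside the ellipse. This $g$ is $C^1$ across $\partial S_1$, vanishes together with its normal derivative there, is $0$ inside, and has $\Delta g = \frac{1}{1-\tau^2} > 0$ outside; so $g$ is subharmonic on $S_1^c$, equals $0$ on $\partial S_1$, and $g(\zeta) \to +\infty$ as $|\zeta|\to\infty$ — but subharmonicity alone gives an upper bound from boundary data, not a lower bound. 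The correct route is to note $g \ge 0$ iff its minimum over $\overline{S_1^c}$ is attained on $\partial S_1$ with value $0$; since $g$ is real-analytic and its only critical behavior is controlled, one shows $\nabla g \neq 0$ in the open exterior (the gradient of $U^{\nu_1}$ outside the ellipse, computed explicitly via the known exterior potential, never cancels the linear-plus-gradient term of the quadratic), so no interior minimum exists and $g > 0$ there. Alternatively, and more cleanly, I would invoke the already-established $c=0$ result (the elliptic law \eqref{S tau 0}), for which this exact inequality is known, since the $\nu_1$-part of the problem is literally the $c=0$ case with $1+c$ rescaled; this lets me cite rather than reprove the exterior estimate, and the whole proof reduces to bookkeeping with Newton's theorem for the disc $S_2$.
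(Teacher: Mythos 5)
Your strategy is essentially the one the paper follows: reduce to the variational conditions \eqref{vari eq_prec}--\eqref{vari ineq_prec} by uniqueness, decompose $\mu=\nu_1-\nu_2$ using $S_2\subset S_1$, and feed in the classical potentials of the ellipse and the disc (the paper's Lemmas~\ref{Lem_ell var} and \ref{Lem_disc var}), which gives the equality on $S$ exactly as you describe. For the exterior inequality on $S_1^c$ you correctly identify the real issue (subharmonicity only gives a maximum principle) and your route (a) --- no interior critical point, using the explicit exterior Cauchy transform of the ellipse --- is precisely what the paper does: it computes $\partial_\zeta H_p(\zeta)=\frac{1}{1-\tau^2}(\bar\zeta-\tau\zeta)-\frac{1}{2\tau}\bigl(\zeta-\sqrt{\zeta^2-4\tau(1+c)}\bigr)$ (the point-charge terms cancel) and checks this vanishes only on $\partial S_1$; you assert rather than carry out this cancellation check, so on its own (a) is a sketch. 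Your route (b), however, is a genuinely different and legitimate finish: after the scaling $\zeta\mapsto\sqrt{1+c}\,\zeta$ the needed estimate is exactly the variational inequality for the classical elliptic law \eqref{S tau 0} with total mass $1+c$, so citing that result replaces the paper's residue/critical-point computation; what you lose is self-containedness (and the explicit identity that the critical set is $\partial S_1$, which is reused in spirit in the pre-critical analysis), what you gain is brevity.

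One slip to fix: on $\operatorname{Int}S_2$ the relevant piece of $H_p$ is $-U^{\nu_2}(\zeta)-2c\log|\zeta-p|+\mathrm{const}=\frac{|\zeta-p|^2}{1-\tau^2}-2c\log|\zeta-p|+\mathrm{const}$ (you dropped the minus sign on $U^{\nu_2}$), and this function is \emph{subharmonic} off $p$, not superharmonic, so the maximum-principle phrasing does not apply as stated. The conclusion is still immediate from the formulas you already have: the function is radial in $r=|\zeta-p|$ with derivative $\frac{2r}{1-\tau^2}-\frac{2c}{r}$, vanishing only at $r=\sqrt{(1-\tau^2)c}$, i.e.\ on $\partial S_2$, so it strictly exceeds its boundary value inside the hole --- this is exactly how the paper disposes of this case via Lemma~\ref{Lem_disc var}. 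With that correction, and with either (a) completed or (b) cited, your argument is a valid proof of the proposition.
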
 

See Figure~\ref{Fig_droplet gen p} for the shape of the droplets and numerical simulations of Fekete point configurations. 
We remark that with slight modifications, Proposition~\ref{Thm_prec} can be further extended to the case with multiple point charges, i.e. the potential of the form 
\begin{equation} \label{Q multiple pts}
\frac{1}{1-\tau^2}\Big( |\zeta|^2-\tau \re \zeta^2 \Big)-2\sum c_j\log|\zeta-p_j|,  \qquad p_j \in \mathbb{C}, \quad c_j \ge 0.
\end{equation}
(See Remark~\ref{Rem_conformal and higher moments} for a related discussion.) 
Let us also mention that a similar statement for an equilibrium problem on the sphere was shown in \cite{brauchart2018logarithmic}.
For a treatment of a more general case, we refer the reader to \cite{del2019equilibrium,legg2021logarithmic,criado2022vector}. 

\begin{figure}[h!]
	\begin{subfigure}{0.32\textwidth}
		\begin{center}	
			\includegraphics[width=\textwidth]{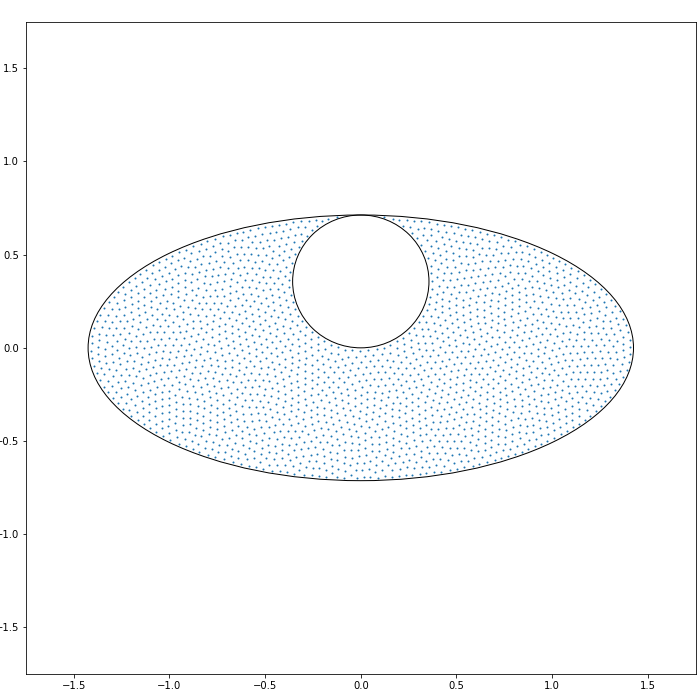}
		\end{center}
		\subcaption{$p=\frac{2}{21} \sqrt{14} \, i$}
	\end{subfigure}	
	\begin{subfigure}{0.32\textwidth}
		\begin{center}	
			\includegraphics[width=\textwidth]{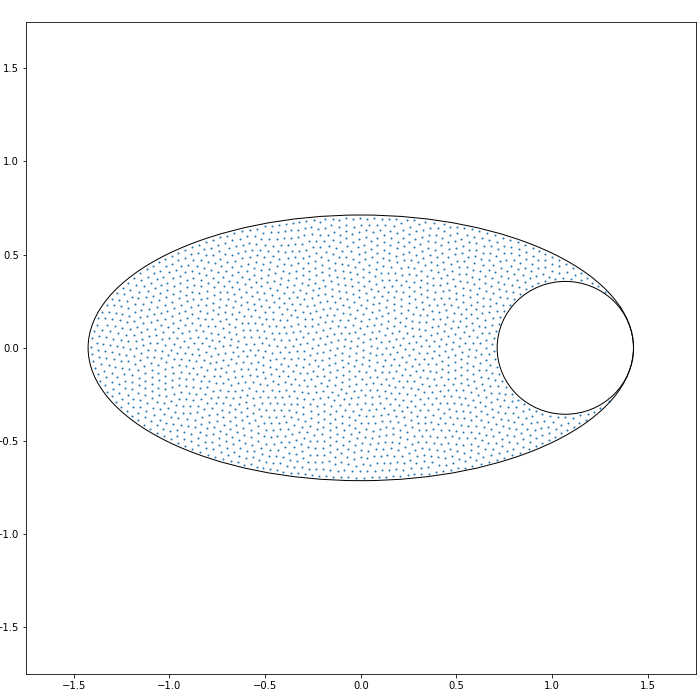}
		\end{center}
		\subcaption{$p=\frac{2}{7}\sqrt{14} $}
	\end{subfigure}	
	\begin{subfigure}{0.32\textwidth}
		\begin{center}	
			\includegraphics[width=\textwidth]{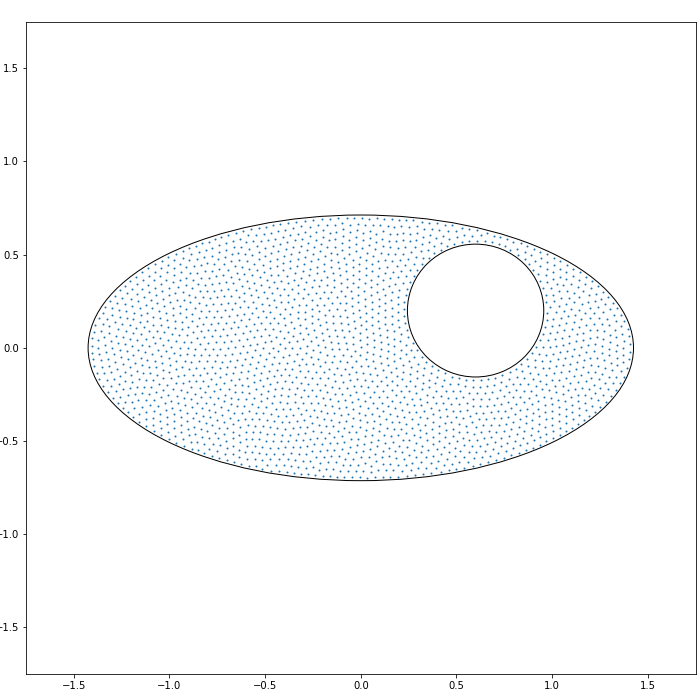}
		\end{center}
		\subcaption{$p=\frac{3}{5}+\frac{1}{5}i$}
	\end{subfigure}	
	\caption{ The droplet $S$ in Proposition~\ref{Thm_prec}, where $\tau=1/3$ and $c=1/7$. Here, a Fekete point configuration with $N=2048$ is also displayed. } \label{Fig_droplet gen p}
\end{figure}

\begin{rem} 
If $p=0$, the condition \eqref{prec_con} corresponds to 
\begin{equation}
\tau < \frac{1}{1+2c}= \tau_c.  
\end{equation}
Therefore Proposition~\ref{Thm_prec} for the special value $p=0$ gives Theorem~\ref{Thm_insertion at 0 symmetric} (i). 
As a consequence, by \eqref{droplet relations}, Theorem~\ref{Thm_insertion at 0} (i) also follows. 
We also mention that if $\tau=0$ and $p>0$, the condition \eqref{prec_con} coincides with \eqref{c p condition tau=0}.
\end{rem}

\begin{rem}[Equilibrium measure in the Hermitian limit] \label{Rem_eq msr 1D}
Before moving on to the planar equilibrium problem for \eqref{Q p general}, we first discuss the one-dimensional problem arising in the Hermitian limit.
For $p \in \R$, the Hermitian limit $\tau \uparrow 1$ of the potential $Q_p$ is given by
\begin{equation} 
\lim_{ \tau \uparrow 1 } Q_p(x+iy)=V_p(x):=\begin{cases}
\displaystyle \frac{x^2}{2}-2c \log|x-p|, &\text{if }y=0,
\smallskip 
\\
+\infty &\text{otherwise}.
\end{cases} 
\end{equation}   
Then one can show that the associated equilibrium measure $\mu_V \equiv \mu_{V_p}$ is given by 
\begin{equation} \label{eq msr 1D}
\frac{d\mu_{V}(x)}{dx}=\frac{\sqrt{ -\prod_{j=1}^{4}(x-\lambda_j) } }{2\pi|x-p|}\cdot  \mathbbm{1}_{ [\lambda_1, \lambda_2 ] \cup [\lambda_3, \lambda_4 ] }(x),
\end{equation}
where 
\begin{align}
& \lambda_1=\frac{ p-2- \sqrt{(p+2)^2+8c}}{2}, \qquad \lambda_2=\frac{ p+2- \sqrt{(p-2)^2+8c}}{2}, \label{lambda 12}
\\
& \lambda_3=\frac{ p-2+ \sqrt{(p+2)^2+8c}}{2}, \qquad \lambda_4=\frac{ p+2+ \sqrt{(p-2)^2+8c}}{2}.  \label{lambda 34}
\end{align}
We remark that when $p=0$, it recovers \eqref{MP square}. See Figure~\ref{Fig_1D eq msrs} for the graphs of the equilibrium measure $\mu_{V_p}$.
The equilibrium measure \eqref{eq msr 1D} follows from the standard method using the Stieltjes transform and the Sokhotski-Plemelj inversion formula. 
For reader's convenience, we provide a proof of \eqref{eq msr 1D} in Appendix~\ref{Appendix_Hermitian}.  

\begin{figure}[h!]
    \begin{subfigure}{0.32\textwidth}
    	\begin{center}	
    		\includegraphics[width=\textwidth]{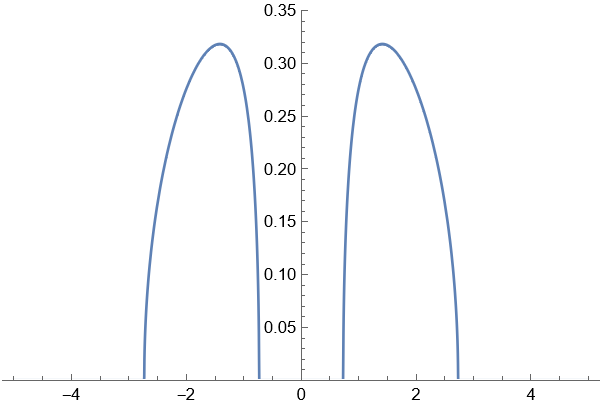}
    	\end{center}
    	\subcaption{$p=0$}
    \end{subfigure}	
    \begin{subfigure}{0.32\textwidth}
    	\begin{center}	
    		\includegraphics[width=\textwidth]{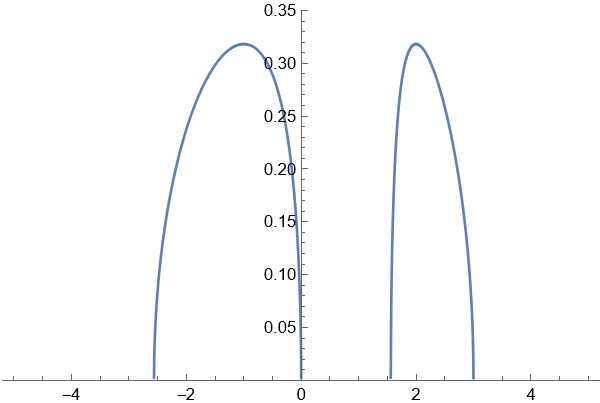}
    	\end{center}
    	\subcaption{$p=1$}
    \end{subfigure}	
    \begin{subfigure}[h]{0.32\textwidth}
    	\begin{center}
    		\includegraphics[width=\textwidth]{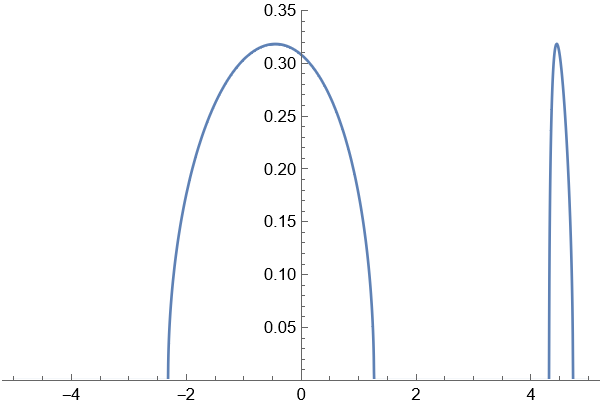}
    	\end{center} \subcaption{$p=4$}
    \end{subfigure}
	\caption{ Graphs of the equilibrium measure $\mu_{V_p}$, where $c=1$. } \label{Fig_1D eq msrs} 
\end{figure} 
\end{rem}

\medskip 

In the rest of this subsection, we prove Proposition~\ref{Thm_prec}. 
First, let us show the following elementary lemmas. 

\begin{lem} \label{Lem_ell var}
For $a, b > 0$, let
$$
K:= \Big\{ (x,y) \in \R^2:  \Big( \frac{x}{a}\Big)^2+\Big( \frac{y}{b} \Big)^2 \le 1 \Big\}.
$$
Then we have 
\begin{equation} \label{Cauchy ell}
\int_K \frac{1}{\zeta-z}\,dA(z)=
\begin{cases}
\displaystyle \bar{\zeta}-\frac{a-b}{a+b}\,\zeta &\text{if }\zeta \in K,
\smallskip
\\
\displaystyle \frac{2ab}{a^2-b^2}\Big(\zeta-\sqrt{\zeta^2-a^2+b^2}\Big) &\text{otherwise.}
\end{cases}
\end{equation}
In particular, for $\zeta \in K$, there exists a constant $c_0 \in \R$ such that 
\begin{equation}
\int_K \log|\zeta-z|^2 \,dA(z)=|\zeta|^2-\frac{a-b}{a+b}\re \zeta^2+c_0. 
\end{equation}
\end{lem}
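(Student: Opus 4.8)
The plan is to compute the Cauchy transform $C_K(\zeta):=\int_K (\zeta-z)^{-1}\,dA(z)$ of the elliptical disc $K$ directly, and then recover the logarithmic potential by a primitive. For the interior case $\zeta\in K$, I would use the standard fact (a consequence of the $\dbar$-formula / Green's theorem) that for a domain $D$ with smooth boundary and $\zeta\in D$,
\begin{equation}
\int_D \frac{1}{\zeta-z}\,dA(z) = \bar\zeta - \frac{1}{\pi}\int_{\d D} \frac{\bar z}{\zeta - z}\,\frac{dz}{2i},
\end{equation}
so everything reduces to evaluating a boundary integral. Parametrising $\d K$ by $z(\theta)=a\cos\theta + ib\sin\theta$, one has $\bar z = \tfrac{a+b}{2}e^{-i\theta} + \tfrac{a-b}{2}e^{i\theta}$ and $z = \tfrac{a+b}{2}e^{i\theta}+\tfrac{a-b}{2}e^{-i\theta}$; substituting $w=e^{i\theta}$ turns the boundary integral into a contour integral over the unit circle of a rational function of $w$, which I evaluate by residues. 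The pole structure is governed by the zeros of $z(w)-\zeta$, i.e. of $\tfrac{a+b}{2}w^2 - \zeta w + \tfrac{a-b}{2}$; exactly one root lies inside $|w|<1$ when $\zeta\in K$ (and both roots' product is $\tfrac{a-b}{a+b}$), and collecting residues yields the claimed expression $\bar\zeta - \tfrac{a-b}{a+b}\zeta$.

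For the exterior case $\zeta\notin K$, the same contour-integral setup applies but now $C_K(\zeta) = -\tfrac{1}{\pi}\int_{\d K}\tfrac{\bar z}{\zeta-z}\,\tfrac{dz}{2i}$ (no $\bar\zeta$ term, since $1/(\zeta-z)$ is holomorphic in $z$ on $\overline K$), and the residue count changes: when $\zeta$ is outside the ellipse, the quadratic $\tfrac{a+b}{2}w^2-\zeta w+\tfrac{a-b}{2}$ has \emph{both} roots inside or \emph{both} outside $|w|=1$ depending on orientation bookkeeping, and the relevant residue produces the algebraic expression $\tfrac{2ab}{a^2-b^2}\bigl(\zeta - \sqrt{\zeta^2-a^2+b^2}\bigr)$; the branch of the square root is the one that decays like $(\text{const})/\zeta$ at infinity, which is forced by $C_K(\zeta)\sim \mathrm{Area}(K)/(\pi\zeta) = ab/\zeta$ as $\zeta\to\infty$ (and one checks $\tfrac{2ab}{a^2-b^2}\cdot\tfrac{a^2-b^2}{2\zeta}=\tfrac{ab}{\zeta}$, consistent). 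Continuity of $C_K$ across $\d K$ gives a useful sanity check linking the two formulas on the boundary.

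For the final statement, note that $\partial_\zeta\bigl(\log|\zeta-z|^2\bigr) = (\zeta-z)^{-1}$, so for $\zeta$ in the interior of $K$,
\begin{equation}
\partial_\zeta \int_K \log|\zeta - z|^2\,dA(z) = \int_K \frac{1}{\zeta-z}\,dA(z) = \bar\zeta - \frac{a-b}{a+b}\zeta = \partial_\zeta\Bigl(|\zeta|^2 - \frac{a-b}{a+b}\re\zeta^2\Bigr),
\end{equation}
using $\partial_\zeta|\zeta|^2 = \bar\zeta$ and $\partial_\zeta\re\zeta^2 = \partial_\zeta\tfrac12(\zeta^2+\bar\zeta^2)=\zeta$. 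Hence the difference of the two sides is $\partial_\zeta$-closed; since it is also real-valued and the left side is evidently harmonic-plus-$|\zeta|^2$-type... more cleanly: the two sides have the same $\partial_\zeta$-derivative, so they differ by an anti-holomorphic function, and taking $\partial_{\bar\zeta}$ of both (both equal $\Lap$ of a potential, namely $\mathbbm 1_K \cdot (\text{density})$ and a constant respectively, matching on the interior) shows the difference is a genuine constant $c_0\in\R$. I would phrase this last step by integrating the $\partial_\zeta$-identity along a path inside $K$ and invoking that both sides are real.

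The main obstacle I anticipate is purely bookkeeping rather than conceptual: getting the branch of $\sqrt{\zeta^2-a^2+b^2}$ correct and consistent with orientation in the residue computation, and making sure the exterior formula genuinely matches the interior one on $\d K$ (the ellipse is where $\zeta^2-a^2+b^2$ can be negative real, i.e. the branch cut runs along the segment $[-\sqrt{a^2-b^2},\sqrt{a^2-b^2}]$, which sits inside $K$ — so no ambiguity arises on the exterior). Everything else is a short residue calculation plus the elementary primitive argument above.
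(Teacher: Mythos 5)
Your plan follows essentially the same route as the paper's proof: Green's formula to trade the area integral for the boundary integral $\frac{1}{2\pi i}\int_{\partial K}\frac{\bar z}{\zeta-z}\,dz$ (plus the $\bar\zeta$ term when $\zeta\in K$), the Joukowsky substitution $z=\frac{a+b}{2}w+\frac{a-b}{2}\frac1w$ on the unit circle, residue calculus controlled by the quadratic $\frac{a+b}{2}w^2-\zeta w+\frac{a-b}{2}$, and the $\partial_\zeta$-primitive plus real-valuedness argument for the logarithmic identity (that last part is fine and matches the paper). However, there is a genuine error in the decisive bookkeeping step: your localisation of the roots is backwards. For $\zeta\in\operatorname{Int}K$ \emph{both} roots $w_\zeta^{\pm}=\bigl(\zeta\pm\sqrt{\zeta^2-a^2+b^2}\bigr)/(a+b)$ lie in $\mathbb{D}$, and for $\zeta\notin K$ exactly one of them does; this follows, e.g., from the fact that the Joukowsky map is a bijection of $\{|w|\ge1\}$ onto $K^c$ (so an interior $\zeta$ has no preimage with $|w|\ge1$), combined with the product of the roots being $\frac{a-b}{a+b}$, of modulus less than one. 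This localisation is exactly what decides which residues are collected, and it needs an argument of this kind; the parenthetical about the product of the roots does not by itself yield your claim, and as stated your count would make the computation fail: taking only one of $w_\zeta^{\pm}$ inside for $\zeta\in K$ leaves an uncancelled $\sqrt{\zeta^2-a^2+b^2}$ and cannot produce the linear answer $\bar\zeta-\frac{a-b}{a+b}\zeta$. The clean interior formula arises precisely because the residues at $w_\zeta^{+}$ and $w_\zeta^{-}$ are summed (giving $-2\frac{a^2+b^2}{a^2-b^2}\zeta$), which together with the residue $\frac{a+b}{a-b}\zeta$ at $w=0$ and the $\bar\zeta$ term gives the result; conversely, in the exterior case it is not ``both inside or both outside'': exactly one root, the branch $w_\zeta^{-}\to0$ as $\zeta\to\infty$, is inside, and its residue is what generates $\frac{2ab}{a^2-b^2}\bigl(\zeta-\sqrt{\zeta^2-a^2+b^2}\bigr)$.

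Two smaller points. First, with counterclockwise orientation the boundary term in your $\dbar$/Green identity should carry a plus sign, $\int_K\frac{dA(z)}{\zeta-z}=\frac{1}{2\pi i}\int_{\partial K}\frac{\bar z}{\zeta-z}\,dz+\bar\zeta\,\mathbbm{1}_{\{\zeta\in K\}}$; as written, your sign would flip the $\frac{a-b}{a+b}\zeta$ contribution. Second, the pole structure is not governed only by the zeros of $z(w)-\zeta$: after substitution the integrand also has a pole at $w=0$ coming from $\bar z(w)$ and $z'(w)$, whose residue $\frac{a+b}{a-b}\zeta$ is essential in both cases. Your proposed sanity checks (decay $\sim ab/\zeta$ at infinity, continuity across $\partial K$) would likely catch these issues, but as the plan stands the residue count is a step that fails, and the correct root localisation is the one ingredient for which no justification is offered.
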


\begin{rem}
The Cauchy transform in \eqref{Cauchy ell} is useful to explicitly compute the moments of the equilibrium measure. 
Namely, by definition, we have
\begin{equation*}
\int_K \frac{1}{\zeta-z} \, dA(z)= \frac{1}{\zeta} \sum_{k=0}^\infty \frac{1}{\zeta^k} \int_K z^k\,dA(z), \qquad \zeta \to \infty.
\end{equation*}
On the other hand, we have
\begin{align*}
\zeta-\sqrt{\zeta^2-a^2+b^2}= \frac{a^2-b^2}{\zeta} \sum_{k=0}^\infty \binom{1/2}{k+1} \frac{ (b^2-a^2)^{k} }{ \zeta^{2k} }, \qquad \zeta \to \infty.
\end{align*}
Combining the above equations with \eqref{Cauchy ell}, we obtain that for any non-negative integer $k,$ 
\begin{equation}
\frac{1}{ab}\int_K z^{2k} \,dA(z) = 2   \binom{1/2}{k+1} (b^2-a^2)^k.
\end{equation}
\end{rem}

\begin{proof}[Proof of Lemma~\ref{Lem_ell var}]
Recall that $\mathbb{D}$ is the unit disc with centre the origin. 
Then the Joukowsky transform $f: \bar{\mathbb{D}}^c \to K^c$ is given by 
\begin{equation}
f(z)=\frac{a+b}{2}\, z+ \frac{a-b}{2}\, \frac{1}{z}.
\end{equation}	
By applying Green's formula, we have
\begin{equation}
\int_K \frac{1}{\zeta-z}\,dA(z)=\frac{1}{2\pi i} \int_{ \partial K } \frac{\bar{z}}{\zeta-z}\,dz+\bar{\zeta} \cdot  \mathbbm{1}_{ \{\zeta \in K\} }.
\end{equation}
Furthermore, by the change of variable $z=f(w)$, it follows that
\begin{align}
\int_{ \partial K } \frac{\bar{z}}{\zeta-z}\,dz&=\int_{ \partial \mathbb{D} } \frac{ \overline{ f(1/\bar{w})  }   }{\zeta-f(w)} f'(w)\, dw=\int_{ \partial \mathbb{D} } g_\zeta(w)\,dw,
\end{align}
where $g_\zeta$ is the rational function given by 
\begin{equation}
g_\zeta(w):=\frac{1}{\zeta-f(w)}
\Big( \frac{a+b}{2} \frac1w+\frac{a-b}{2} w \Big) \Big( \frac{a+b}{2} - \frac{a-b}{2} \frac{1}{w^2} \Big). 
\end{equation}
Observe that 
$$
\zeta=f(w) \qquad \textrm{if and only if} \qquad w=w_\zeta^\pm:=\frac{\zeta \pm \sqrt{\zeta^2-a^2+b^2} }{a+b},
$$
i.e. 
the points $w_\zeta^\pm$ are solutions to the quadratic equation
$$
(a+b)w^2-2\zeta \, w+(a-b)=0.
$$
Here, the branch of the square root is chosen such that
$$
w_\zeta^- \to 0 \qquad \zeta \to \infty.
$$
By above observation, the function $g_\zeta$ has poles only at 
$$
0, \qquad w_\zeta^+, \qquad w_\zeta^-. 
$$
Moreover note that 
$$
\zeta \in K \qquad \textrm{if and only if} \qquad w_\zeta^\pm \in \mathbb{D}. 
$$
Notice that if $\zeta \in K^c$, then $w_\zeta^- \in \mathbb{D}$ and $w_\zeta^+ \in \mathbb{D}^c$.

Using the residue calculus, we have 
\begin{equation}
\underset{ w=0 }{\textrm{Res}} \,\Big[ g_\zeta(w)\Big]=\frac{a+b}{a-b}\,\zeta.
\end{equation}
On the other hand, we have
\begin{equation}
\underset{ w=w_\zeta^\pm }{\textrm{Res}} \,\Big[ g_\zeta(w) \Big]=- \overline{  f(1/ \bar{w}_\zeta^\pm  ) }=-\frac{a+b}{2} \frac{1}{w_\zeta^\pm }-\frac{a-b}{2} w_\zeta^\pm. 
\end{equation} 
In particular, 
\begin{equation}
\underset{ w=w_\zeta^+ }{\textrm{Res}} \,\Big[ g_\zeta(w) \Big]+\underset{ w=w_\zeta^- }{\textrm{Res}} \,\Big[ g_\zeta(w) \Big] =-2 \frac{a^2+b^2}{a^2-b^2}\,\zeta.  
\end{equation}
Combining all of the above, we obtain the desired identity \eqref{Cauchy ell}.  
The second assertion immediately follows from \eqref{Cauchy ell} and the real-valuedness of $\zeta \mapsto \int_K \log|\zeta-z|^2 \,dA(z)$. 
\end{proof} 

\begin{lem} \label{Lem_disc var}
For $R>0$ and $p \in \mathbb{C}$ we have 
\begin{equation}
\int_{ \mathbb{D}(p,R) } \log|\zeta-z| \,dA(z)=
\begin{cases}
\displaystyle R^2 \log|\zeta-p| &\text{if }\zeta \notin \mathbb{D}(p,R),
\smallskip 
\\
\displaystyle R^2 \log R-\frac{R^2}{2}+\frac{|\zeta-p|^2}{2} &\text{otherwise}.
\end{cases}
\end{equation}
\end{lem}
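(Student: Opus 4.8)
The plan is to reduce to the centred case $p=0$ via the translation $z\mapsto z+p$, so that it suffices to evaluate
\[
U(\zeta):=\int_{\mathbb{D}(0,R)}\log|\zeta-z|\,dA(z),
\]
and then replace $\zeta$ by $\zeta-p$ at the end. For the exterior case, if $|\zeta|>R$ then $z\mapsto\log|\zeta-z|$ is harmonic on a neighbourhood of $\overline{\mathbb{D}(0,R)}$, so the mean value property together with $\int_{\mathbb{D}(0,R)}dA=R^2$ gives $U(\zeta)=R^2\log|\zeta|$; since $U$ is continuous (from local integrability of $\log$), this persists on $|\zeta|=R$. For the interior case, I would pass to polar coordinates $z=re^{i\theta}$ and use the classical circle average
\[
\frac{1}{2\pi}\int_0^{2\pi}\log|\zeta-re^{i\theta}|\,d\theta=\log\max(|\zeta|,r),
\]
obtained by writing $\log|\zeta-re^{i\theta}|=\re\log(\zeta-re^{i\theta})$ and expanding the appropriate factor in a convergent power series in $e^{i\theta}$ or in $e^{-i\theta}$, according to whether $r<|\zeta|$ or $r>|\zeta|$. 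Then for $|\zeta|<R$,
\[
U(\zeta)=2\int_0^R\log\max(|\zeta|,r)\,r\,dr=2\int_0^{|\zeta|}r\log|\zeta|\,dr+2\int_{|\zeta|}^{R}r\log r\,dr,
\]
and the elementary primitive $\int r\log r\,dr=\tfrac{r^2}{2}\log r-\tfrac{r^2}{4}$ collapses the right-hand side to $R^2\log R-\tfrac{R^2}{2}+\tfrac{|\zeta|^2}{2}$. Undoing the translation gives the claimed formula.

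As an alternative that parallels the proof of Lemma~\ref{Lem_ell var}, one can differentiate under the integral sign: $\partial_\zeta U(\zeta)=\tfrac12\int_{\mathbb{D}(0,R)}(\zeta-z)^{-1}\,dA(z)$, which by the $a=b=R$ specialisation of \eqref{Cauchy ell} equals $\tfrac12\bar\zeta$ on $\mathbb{D}(0,R)$ and $\tfrac12 R^2/\zeta$ outside. Since $U$ is real-valued, $\partial_{\bar\zeta}U=\overline{\partial_\zeta U}$, so on the disc $U(\zeta)=\tfrac12|\zeta|^2+C$, and the constant $C=R^2\log R-\tfrac{R^2}{2}$ is pinned down by matching with the exterior value $R^2\log R$ across the circle $|\zeta|=R$, using continuity of $U$.

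The computation is entirely elementary, so there is no genuine obstacle; the only points deserving a word of care are the justification for differentiating (or averaging) under the integral sign, which is immediate from $\log\in L^1_{\mathrm{loc}}$, and the matching of the two formulas across the circle $|\zeta-p|=R$, i.e. the continuity of the logarithmic potential of the uniform measure on a disc.
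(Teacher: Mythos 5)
Your main argument is essentially identical to the paper's proof: translate to the centred disc, use the circle average of $\log|\zeta-re^{i\theta}|$ (Jensen's formula), and integrate in $r$ with the primitive $\int r\log r\,dr$, so it is correct as written. Your alternative sketch via Lemma~\ref{Lem_ell var} is fine in spirit, but note the exterior formula there has $a^2-b^2$ in a denominator, so the specialisation $a=b=R$ requires a limiting argument (or a direct disc computation) rather than literal substitution.
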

\begin{proof} 
First, recall the well-known Jensen's formula: for $r>0$, 
\begin{equation} \label{Jensen}
\frac{1}{2\pi} \int_0 ^{2\pi} \log |\zeta-r e^{i\theta}| \,d\theta
=
\begin{cases}
\log r &\text{if }r>|\zeta|,
\smallskip 
\\
\log|\zeta| &\text{otherwise}.
\end{cases}
\end{equation}	
By the change of variables, we have 
\begin{align*}
\int_{ \mathbb{D}(p,R) } \log|\zeta-z| \,dA(z)&= \int_{ \mathbb{D}(0,R) } \log|\zeta-p-z| \,dA(z) =\frac{1}{\pi} \int_0^R r \int_{0}^{2\pi} \log|\zeta-p-r e^{i\theta}|\,d\theta \,dr.  
\end{align*}
Suppose that $\zeta \notin \mathbb{D}(p,R)$. Then by applying \eqref{Jensen}, we have 
\begin{align*}
\frac{1}{\pi} \int_0^R r \int_{0}^{2\pi} \log|\zeta-p-r e^{i\theta}|\,d\theta \,dr&= 2\int_0^R r \, \log|\zeta-p| \,dr=R^2 \log|\zeta-p|.
\end{align*}
On the other hand if $\zeta \in \mathbb{D}(p,R)$, we have 
\begin{align*}
 \frac{1}{\pi} \int_0^R r \int_{0}^{2\pi} \log|\zeta-p-r e^{i\theta}|\,d\theta \,dr
&= 2\int_0^{ |\zeta-p| } r \, \log|\zeta-p| \,dr+2\int_{ |\zeta-p| }^R r \, \log r \,dr
\\
&=R^2 \log R-\frac{R^2}{2}+\frac{|\zeta-p|^2}{2},
\end{align*}
which completes the proof. 
\end{proof} 

We are now ready to complete the proof of Proposition~\ref{Thm_prec}. 

\begin{proof}[Proof of Proposition~\ref{Thm_prec}]
Note that by \eqref{Frostman thm}, the equilibrium measure $\mu$ associated with $Q_p$ is of the form
\begin{equation}
d\mu(z):=\Delta Q_p(z) \cdot  \mathbbm{1}_S(z) \,dA(z)= \frac{1}{1-\tau^2} \cdot  \mathbbm{1}_S(z) \,dA(z).
\end{equation}	
Due to the assumption \eqref{prec_con}, we have
\begin{align*}
\int \log \frac{1}{|\zeta-z|^2}\,d\mu(z) =\frac{1}{1-\tau^2} \Big( \int_{S_1} \log \frac{1}{|\zeta-z|^2}\,dA(z)- \int_{S_2} \log \frac{1}{|\zeta-z|^2}\,dA(z) \Big).
\end{align*}
Note that by Lemma~\ref{Lem_ell var}, there exists a constant $c_0$ such that 
\begin{equation} \label{S1 int Qp}
\int_{S_1} \log \frac{1}{|\zeta-z|^2}\,dA(z)=-|\zeta|^2+\tau \re \zeta^2-c_0. 
\end{equation}
On the other hand, by Lemma~\ref{Lem_disc var}, we have 
\begin{equation}  \label{S2 int Qp}
\int_{S_2} \log \frac{1}{|\zeta-z|^2}\,dA(z)=-2(1-\tau^2)c \, \log|\zeta-p|.  
\end{equation}
Combining \eqref{S1 int Qp}, \eqref{S2 int Qp} and \eqref{Q p general}, we obtain
\begin{equation}
\int \log \frac{1}{|\zeta-z|^2}\,d\mu(z) =-Q_p(\zeta)-\frac{c_0}{1-\tau^2},
\end{equation}
which leads to \eqref{vari eq_prec}. 

Next, we show the variational inequality \eqref{vari ineq_prec}. Note that if $\zeta \in S_2$, it immediately follows from Lemma~\ref{Lem_disc var}. Thus it is enough to verify \eqref{vari ineq_prec} for the case $\zeta \in S_1^c$. 
Let 
\begin{equation}
H_p(\zeta):= \int \log \frac{1}{|\zeta-z|^2} \,d\mu(z)+Q_p(\zeta). 
\end{equation}
Suppose that the variational inequality \eqref{vari ineq_prec} does not hold. 
Then since $H_p(\zeta) \to \infty$ as $\zeta \to \infty$, there exists $\zeta_* \in S_1^c$ such that 
\begin{equation}
 \partial_\zeta H_p(\zeta)|_{\zeta=\zeta_*} = 0. 
\end{equation}
On the other hand, by Lemmas~\ref{Lem_ell var} and \ref{Lem_disc var}, if $\zeta \in S_1^c$, the Cauchy transform of the measure $\mu$ is computed as 
\begin{equation} \label{Cauchy transform prec}
\int \frac{d\mu(z)}{\zeta-z} =\frac{1}{2\tau}\Big(\zeta-\sqrt{\zeta^2-4\tau(1+c)} \Big)-\frac{c}{\zeta-p}.
\end{equation}
Together with \eqref{Q p general}, this gives rise to  
\begin{align}
\partial_\zeta Q_p(\zeta)-\int \frac{d\mu(z)}{\zeta-z}=\frac{1}{1-\tau^2}\Big(\bar{\zeta}-\tau \zeta\Big)-\frac{1}{2\tau}\Big(\zeta-\sqrt{\zeta^2-4\tau(1+c)}\Big).
\end{align}
Then it follows that the condition $\partial_\zeta H_p(\zeta)=0$ is equivalent to 
\begin{equation}
(1+\tau^2)|\zeta|^2-\tau(\zeta^2+\bar{\zeta}^2)=(1-\tau^2)^2(1+c).
\end{equation}
Therefore, by \eqref{S1 Qp}, one can notice that $\partial_\zeta H_p(\zeta)=0$ if and only if $\zeta \in \partial S_1$.
This yields a contradiction with the assumption $\zeta_* \in S_1^c.$
Therefore we conclude that the variational inequality \eqref{vari ineq_prec} holds for $\zeta \in S^c$, which completes the proof. 
\end{proof} 

\begin{rem}
Let us denote by 
\begin{equation}
m_k:= \int z^k \,d\mu(z)
\end{equation}
the $k$-th moment of the equilibrium measure. 
Notice that the Cauchy transform of $\mu$ satisfies the asymptotic expansion 
\begin{equation} \label{Cauchy transform and moments}
\int \frac{d\mu(z)}{\zeta-z} =  \frac{1}{\zeta} \sum_{k=0}^\infty  \frac{m_k}{\zeta^k} , \qquad \zeta \to \infty. 
\end{equation}
Using this property and \eqref{Cauchy transform prec}, after straightforward computations, one can verify that the equilibrium measure $\mu$ in Proposition~\ref{Thm_prec} has the moments
\begin{equation}
m_{2k}= 2 \frac{ (2k-1)! }{  (k-1)! (k+1)! } \tau^{k}   (1+c)^{k+1} - c\, p^{2k}, \qquad m_{2k+1} = -c\, p^{2k+1}. 
\end{equation}
Notice in particular that if $p=0$, all odd moments vanish. 
\end{rem}

\subsection{Pre-critical case}    \label{Subsec_pre}

In this subsection, we show Theorem~\ref{Thm_insertion at 0} (ii). 
Then by \eqref{droplet relations}, Theorem~\ref{Thm_insertion at 0 symmetric} (ii) follows.

\begin{proof}[Proof of Theorem~\ref{Thm_insertion at 0} (ii)]
Recall that $\widehat{Q}$ is given by \eqref{Q hat} and that all we need to show is the variational principles \eqref{vari eq_prec} and \eqref{vari ineq_prec} for $W=\widehat{Q}.$ 
For this, similar to above, let 
\begin{equation}
H(\zeta):= \int \log \frac{1}{|\zeta-z|^2} \,d\widehat{\mu}(z)+ \widehat{Q}(\zeta),
\end{equation}
where $\widehat{\mu}$ is the equilibrium measure associated with $\widehat{Q}$. 
Then 
\begin{equation}
\partial_\zeta H(\zeta)= \partial_\zeta \widehat{Q}(\zeta)-C(\zeta)= \frac{1}{1-\tau^2} \Big( \sqrt{ \frac{ \bar{\zeta} }{ \zeta } } - \tau \Big) -\frac{c}{\zeta}-C(\zeta),  
\end{equation}
where $C(\zeta)$ is the Cauchy transform of $\widehat{\mu}$ given by 
\begin{equation}
C(\zeta) =\frac{1}{2(1-\tau)^2} \int_{ \widehat{S} } \frac{1}{\zeta-z} \frac{1}{|z|}\,dA(z).  
\end{equation}
Here, we have used \eqref{Frostman thm}. 

Applying Green's formula, we have 
\begin{align} 
\begin{split} \label{Cauchy Green}
(1-\tau^2)C(\zeta) &= \frac{1}{2\pi i}   \int_{\partial \widehat{S} } \frac{1}{\zeta-z} \sqrt{ \frac{\bar{z}}{z} } \, dz + \sqrt{  \frac{\bar{\zeta}}{\zeta}  } \cdot  \mathbbm{1}_{ \{\zeta \in \Int( \widehat{S} ) \} } . 
\end{split}
\end{align}
Recall that $f$ is given by \eqref{f rational main}. 
Let
\begin{align}
\begin{split} \label{def of g}
g(w)&:=   \sqrt{ \frac{\overline{f(1/\bar{w})} }{f(w)} } \,f'(w). 
\end{split}
\end{align}
Since $f'(a\tau)=0,$ the function $g(w)$ has poles only at $0, 1/a, a$.
We also write 
\begin{equation}
h_\zeta(w):= \frac{g(w)}{\zeta-f(w)} .
\end{equation}
Using the change of variable $z=f(w)$, 
\begin{align}
\begin{split}
 \frac{1}{2\pi i}   \int_{\partial \widehat{S} } \frac{1}{\zeta-z} \sqrt{ \frac{\bar{z}}{z} } \, dz &= \frac{1}{2\pi i} \int_{\partial \mathbb{D}} \frac{1}{\zeta-f(w)} \sqrt{ \frac{\overline{f(1/\bar{w})} }{f(w)} } \,f'(w) \, dw 
 = \frac{1}{2\pi i} \int_{\partial \mathbb{D}} h_\zeta(w) \, dw .
\end{split}
\end{align}
By the residue calculus, we have 
\begin{equation}
\underset{ w=0 }{\textrm{Res}} \,\Big[ h_\zeta(w) \Big] = \frac{1}{\tau}, \qquad \underset{ w=a }{\textrm{Res}} \,\Big[ h_\zeta(w) \Big] = 0. 
\end{equation}

Note that $\zeta=f(w)$ is equivalent to 
\begin{equation} \label{cubic equation v0}
d(1-aw)(w-a\tau)^2= w(w-a)\zeta, \qquad d=\frac{(1+\tau)(1+2c)}{2},
\end{equation}
which can be rewritten as a cubic equation 
\begin{equation} \label{cubic equation w zeta}
ad w^3-( d+2a^2d\tau-\zeta  )w^2+a(2d\tau+a^2\tau^2 d-\zeta)w-a^2\tau^2 d= 0. 
\end{equation}

For given $\zeta \in \mathbb{C}$, there exist $w_\zeta^{(j)}$ $(j=1,2,3)$ such that $f(w_\zeta^{(j)})=\zeta.$
Note that by \eqref{cubic equation w zeta}, we have 
\begin{equation}
w_\zeta^{(1)} w_\zeta^{(2)} w_\zeta^{(3)}= a \tau^2 \in (-1,0). 
\end{equation}
Furthermore, since $f$ is a conformal map from $\mathbb{D}^c$ onto $\widehat{S}^c$, we have the following:
\begin{enumerate}
    \item If $\zeta \in \Int( \widehat{S} )$, then all $w_\zeta^{(j)}$'s are in $\mathbb{D}$; 
    \smallskip 
    \item If $\zeta \in \widehat{S}^c$, then two of $w_\zeta^{(j)}$'s are in $\mathbb{D}$. 
\end{enumerate}
By the residue calculus using \eqref{f rational main} and \eqref{def of g}, for each $j,$ 
\begin{align}
\begin{split}
\underset{ w=w_\zeta^{(j)} }{\textrm{Res}} \,\Big[ h_\zeta(w) \Big]&= -\frac{ g(w_\zeta^{(j)}  ) }{ f'( w_\zeta^{(j)} )  }  = - \frac{(w^{(j)}_\zeta-a)(1-a\tau w^{(j)}_\zeta)}{ (w^{(j)}_\zeta-a\tau)(1-aw^{(j)}_\zeta) }
\\
&= \frac{d}{\zeta} \frac{ (a\tau w_\zeta^{(j)}-1) ( w_\zeta^{(j)}-a\tau  )   }{ w_\zeta^{(j)} } 
= \frac{d}{\zeta} \Big(  a \tau  w_\zeta^{(j)}-(a^2\tau^2+1)  + \frac{ a\tau   }{ w_\zeta^{(j)} } \Big), 
\end{split}
\end{align}
where we have used \eqref{cubic equation v0}. 
On the other hand, it follows from \eqref{cubic equation w zeta} that  
\begin{equation}
\sum_{j=1}^3 w_\zeta^{(j)}= \frac{ d+2a^2 d\tau-\zeta }{ ad }, \qquad  
\sum_{j=1}^3 \frac{1}{w_\zeta^{(j)}}= \frac{-\zeta+2d\tau+a^2\tau^2d}{a\tau^2 d}. 
\end{equation}
These relations give rise to 
\begin{align}
\begin{split}
d\sum_{j=1}^3 \Big(  a \tau  w_\zeta^{(j)}-(a^2\tau^2+1)  + \frac{ a\tau   }{ w_\zeta^{(j)} } \Big)&= d\tau+2a^2d\tau^2-\tau \zeta-\frac{\zeta}{\tau} +2d+a^2\tau d-3d(a^2\tau^2+1) 
\\
&= -\Big( \tau+\frac{1}{\tau} \Big) \zeta -c(1-\tau^2).
\end{split}
\end{align}
Combining all of the above, we have shown that if $\zeta \in \Int (\widehat{S})$, 
\begin{equation}
\sum_{j=1}^3\underset{ w=w_\zeta^{(j)} }{\textrm{Res}} \,\Big[ h_\zeta(w) \Big] = -\Big( \tau+\frac{1}{\tau} \Big)- \frac{c(1-\tau^2)}{\zeta}. 
\end{equation}
Therefore if $\zeta \in \Int (\widehat{S})$, we obtain 
\begin{align}
\begin{split}
(1-\tau^2)C(\zeta) &=  \sqrt{  \frac{\bar{\zeta}}{\zeta}  } -\frac{1}{\tau} -c(1-\tau^2) = (1-\tau^2) \partial_\zeta\widehat{Q}(\zeta).
\end{split}
\end{align}
Then by \eqref{Cauchy Green}, the variational equality \eqref{vari eq_prec} follows. 

Now it remains to show the variational inequality \eqref{vari ineq_prec}. 
Note that by definition, $H(\zeta) \to \infty$ as $\zeta \to \infty$. 
Suppose that the variational inequality \eqref{vari ineq_prec} does not hold. 
Then there exists $\zeta_* \in \widehat{S}^c$ such that 
\begin{equation} \label{contra eq}
\partial_\zeta H(\zeta)|_{\zeta=\zeta_* }= \partial \widehat{Q}(\zeta_*)-C(\zeta_*) =0.  
\end{equation}
Recall that if $\zeta \in \widehat{S}^c$, then only one of $w_\zeta^{(j)}$'s, say $w_\zeta$, is in $\mathbb{D}^c$. 
By combining the above computations, we have that for $\zeta \in \widehat{S}^c$,  
\begin{align}
\begin{split}
(1-\tau^2) \Big( \partial_\zeta \widehat{Q}(\zeta)-C(\zeta) \Big) = \sqrt{ \frac{\bar{\zeta}}{\zeta} } -  \underset{ w=w_\zeta }{\textrm{Res}} \,\Big[ h_\zeta(w) \Big] =  \sqrt{ \frac{\bar{\zeta}}{\zeta} }-  \frac{d}{\zeta} \Big(  a \tau  w_\zeta-(a^2\tau^2+1)  + \frac{ a\tau   }{ w_\zeta } \Big).
\end{split}
\end{align}
Therefore the identity \eqref{contra eq} holds if and only if 
\begin{equation} \label{zeta * eq}
|\zeta_*|  = d  \Big(  a \tau  w_{\zeta_*}-(a^2\tau^2+1)  + \frac{ a\tau   }{ w_{\zeta_*} } \Big).
\end{equation} 
Note that by \eqref{f rational main},
\begin{align*}
-\frac{d}{f(x)}  \Big(  a \tau  x-(a^2\tau^2+1)  + \frac{ a\tau   }{ x } \Big) &= -\frac{1}{f(x)}\frac{(1+\tau)(1+2c)}{2} \frac{ (a\tau x-1)(x-a\tau) }{x} = \frac{ (a\tau x-1)(x-a) }{ (a x -1)(x-a \tau) }. 
\end{align*}
Therefore if $x<1/(a\tau)$, 
\begin{align*}
d \Big(  a \tau  x-(a^2\tau^2+1)  + \frac{ a\tau   }{ x } \Big) < \tau |f(x)| < |f(x)|. 
\end{align*}
From this, we notice that \eqref{zeta * eq} does not hold for $w_{ \zeta_* } \in \R$. 
Furthermore, this implies that the right-hand side of \eqref{zeta * eq} is real-valued if and only if $w_{\zeta_*} \in \partial \mathbb{D}$, equivalently, $\zeta_* \in \partial \widehat{S}.$ 
This contradicts with the assumption that $\zeta_* \in \widehat{S}^c$. 
Now the proof is complete.   
\end{proof}

\appendix

\section{Conformal mapping method: the pre-critical case} \label{Appendix conformal mapping}

In this appendix, we present the conformal mapping method, which is helpful to derive the candidate of the droplet given in terms of the rational function \eqref{f rational main}.

\begin{prop} \label{Prop_conformal map}
Let $\tau \in (\tau_c,1)$.  
Suppose that $\widehat{S}$ in \eqref{eq msr Q hat} is simply connected. 
Let $f$ be a unique conformal map $(\bar{\mathbb{D}}^c,\infty) \to (\widehat{S}^c,\infty)$, which satisfies
\begin{equation}
f(z)=r_1\, z +r_2+O\Big( \frac{1}{z} \Big), \qquad z \to \infty. 
\end{equation}
Then the following holds.
\begin{enumerate}
    \item[(i)] The conformal map $f$ is a rational function of the form 
\begin{equation} \label{f rational form}
f(z)=r_1z+r_2+\frac{r_3}{z}+\frac{r_4}{z-a}, \qquad a\in(-1,0), 
\end{equation}
which satisfies
\begin{equation} \label{f(1/a)=0}
f(1/a)=\frac{r_1}{a}+r_2+r_3 a + \frac{ar_4}{1-a^2}=0. 
\end{equation}
\item[(ii)] The parameters $r_j$ $(j=1,\dots,4)$ are given by
\begin{align} 
\begin{split} \label{r1 r2 r3 r4}
&r_1= \frac{1+\tau}{2} \sqrt{ \frac{1+2c}{\tau} }, \qquad \quad r_2=\frac{1+\tau}{2\tau}( \tau(1+2c)+2\tau-1 ), 
\\
&r_3= \frac{1+\tau}{2}\tau \sqrt{ \tau(1+2c) }, \qquad r_4= \frac{ (1-\tau)^2(1+\tau)(1-(1+2c)\tau) }{ 2\tau\sqrt{\tau(1+2c)} }
\end{split}
\end{align}
and 
\begin{equation}
a= - \frac{1}{ \sqrt{ \tau(1+2c) } }.
\end{equation}
\end{enumerate}
\end{prop}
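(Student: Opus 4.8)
The plan is to transport the Schwarz function of $\partial\widehat{S}$ to the unit circle via $f$ and to exploit the variational characterisation of $\mu_{\widehat{Q}}$.

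First I would record the basic boundary identity. Differentiating the equilibrium condition \eqref{vari eq_prec} for $W=\widehat{Q}$ with respect to $\zeta$ shows that the Cauchy transform $C(\zeta):=\int(\zeta-z)^{-1}\,d\mu_{\widehat{Q}}(z)$ --- which is holomorphic on $\widehat{S}^c$ with $C(\zeta)=1/\zeta+O(\zeta^{-2})$ as $\zeta\to\infty$ --- takes the boundary values $\partial_\zeta\widehat{Q}(\zeta)=\frac{1}{1-\tau^2}\big(\sqrt{\bar\zeta/\zeta}-\tau\big)-c/\zeta$ on $\partial\widehat{S}$. Replacing $\bar\zeta$ by the Schwarz function $S(\zeta)$ of $\partial\widehat{S}$ and rearranging gives
\[
\sqrt{S(\zeta)/\zeta}=R(\zeta):=(1-\tau^2)C(\zeta)+\tau+(1-\tau^2)c/\zeta ,\qquad \zeta\in\partial\widehat{S}.
\]
The right-hand side is meromorphic on $\widehat{S}^c$ with a single simple pole, at $\zeta=0$ (recall $0\notin\widehat{S}$ since $c>0$), and $R(\zeta)\to\tau$ as $\zeta\to\infty$; hence $S$ continues meromorphically to all of $\widehat{S}^c$ as $S(\zeta)=\zeta R(\zeta)^2$, with simple poles only at $0$ (where $S(\zeta)\sim(1-\tau^2)^2c^2/\zeta$) and at $\infty$ (where $S(\zeta)\sim\tau^2\zeta$).

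Next I would prove (i). Since $\widehat{Q}(\bar\zeta)=\widehat{Q}(\zeta)$, the droplet is symmetric about $\R$, so $f$ has real coefficients and $f(1/w)=\overline{f(w)}=S(f(w))$ for $w\in\partial\mathbb{D}$. The function $w\mapsto f(1/w)$ is a priori holomorphic on $\mathbb{D}$ with only a simple pole at $0$; because $S\circ f$ is meromorphic on a neighbourhood of $\bar{\mathbb{D}}^c$ (using that $\partial\widehat{S}$, hence $f$, extends analytically across $\partial\mathbb{D}$), the identity above continues $w\mapsto f(1/w)$ meromorphically to $\widehat{\mathbb{C}}$, so $f$ is a rational function. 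Its poles are the images under $w\mapsto1/w$ of the poles of $w\mapsto f(1/w)$, which are exactly $0$, $\infty$, and $f^{-1}(0)$ (this last being produced by the simple pole of $S$ at $0$), all simple. Thus $f$ has simple poles precisely at $\infty$, $0$ and $a:=1/f^{-1}(0)$, which is the partial-fraction form \eqref{f rational form}, and $f(1/a)=f(f^{-1}(0))=0$ is exactly \eqref{f(1/a)=0}. Finally $f^{-1}(0)$ is real (preimage of the real point $0\notin\widehat{S}$) with $|f^{-1}(0)|>1$, and lies to the left of $-1$ for $\tau\in(\tau_c,1)$, so $a\in(-1,0)$.

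For (ii) I would pin down the constants. The function $R\circ f=\sqrt{f(1/w)/f(w)}$ is single-valued on the simply connected set $\widehat{\mathbb{C}}\setminus\bar{\mathbb{D}}$, so the rational function $f(1/w)/f(w)$ has no odd-order zero or pole there; since the two zeros of $f$ other than $1/a$ necessarily lie in $\mathbb{D}$, they must coincide, i.e.\ $f$ has a double zero $\beta\in\mathbb{D}$. Comparing leading behaviour at $w=\infty$, where $\sqrt{f(1/w)/f(w)}\to\tau$, forces $\operatorname{Res}_0 f=\tau^2 r_1$, whence $\beta=a\tau$, so
\[
f(z)=K\,\frac{(1-az)(z-a\tau)^2}{z(z-a)} .
\]
The remaining parameters $K$ and $a$ are then fixed by two conditions read off from $C=\frac{1}{1-\tau^2}(R-\tau)-c/\zeta$: (1) $C\circ f$ must be holomorphic at $w=1/a$, i.e.\ the residue of $\sqrt{f(1/w)/f(w)}$ there cancels that of $(1-\tau^2)c/f(w)$; and (2) the normalisation $C(\zeta)=1/\zeta+O(\zeta^{-2})$ at $\infty$. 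Solving this $2\times2$ system gives $a=-1/\sqrt{\tau(1+2c)}$ and $K=(1+\tau)(1+2c)/2$, which upon partial-fraction expansion are exactly the values \eqref{r1 r2 r3 r4}; a concluding check confirms $a\in(-1,0)$ and $r_3,r_4\neq0$ for $\tau\in(\tau_c,1)$, $c>0$. The hardest part is Step (i): justifying the regularity that allows $f$ to continue across $\partial\mathbb{D}$, excluding any singularity of the Schwarz function in $\widehat{S}^c$ beyond the pole at $0$ forced by the point charge, and correctly counting that this produces exactly one additional simple pole $1/a$; once the rational ansatz and its full zero/pole data are in hand, fixing the constants is a lengthy but routine residue computation.
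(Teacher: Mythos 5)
Your part (i) follows the paper's own route: differentiate the variational equality to identify the boundary values of the Cauchy transform, read off the Schwarz function $S(\zeta)=\zeta\bigl[(1-\tau^2)(C(\zeta)+c/\zeta)+\tau\bigr]^2$, and use $f(1/w)=S(f(w))$ on $\partial\mathbb{D}$ to continue $f$ to a rational map with simple poles at $0,\infty$ and at $a=1/f^{-1}(0)$ --- this is exactly the argument in the paper's Appendix A. Where you genuinely diverge is part (ii). The paper extracts algebraic relations among $r_1,\dots,r_4,a$ by Taylor-expanding the continuation identity at $z\to0$ (Lemma~\ref{Lem_r1 r2 r3 r4 relations}) and at $z\to a$ (Lemma~\ref{Lem_r1 r2 eq}), and only obtains the double zero at the end, by evaluating the mass-one contour integral \eqref{mass 1 var} through the residues \eqref{Residue at 0}--\eqref{Residue at a}. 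You instead get the double zero up front from single-valuedness of $\sqrt{f(1/w)/f(w)}=R\circ f$ on the (simply connected) exterior of the disc, which forces all zeros/poles of $f(1/w)/f(w)$ there to have even order; together with $R\to\tau$ at infinity this reduces the ansatz to $f(z)=K(1-az)(z-a\tau)^2/(z(z-a))$, and your two scalar conditions --- cancellation of the poles of $\frac{1}{1-\tau^2}R\circ f$ and $c/f$ at $w=1/a$ (i.e.\ holomorphy of $C$ at the origin), and the mass-one expansion at infinity --- indeed yield $K(1-a^2\tau)=(1+\tau)c$ and $K(1+a^2\tau)=(1+\tau)(1+c)$, hence $K=(1+\tau)(1+2c)/2$ and $a^2=1/(\tau(1+2c))$, matching \eqref{r1 r2 r3 r4}. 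This is a cleaner bookkeeping of the same analytic input (Schwarz identity, the known simple pole of $C+c/\zeta$ at $0$, total mass one), and it avoids the paper's chain of intermediate relations. Two small points you should still write out: $r_3/r_1=\tau^2$ only gives $\beta^2=a^2\tau^2$, so the choice $\beta=a\tau$ over $\beta=-a\tau$ must be excluded (e.g.\ the wrong sign makes your two determining equations inconsistent with $|a|<1$, or violates univalence of $f$); and the claims that $f$ has real coefficients and that $f^{-1}(0)<-1$ (so $a\in(-1,0)$) deserve a line via the reflection symmetry $\widehat{Q}(\bar\zeta)=\widehat{Q}(\zeta)$, uniqueness of the normalized exterior map with $r_1>0$, and the location of the droplet on the right of the origin.
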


Note that the rational function $f$ with the choice of parameters \eqref{r1 r2 r3 r4} corresponds to \eqref{f rational main}. 
Therefore Proposition~\ref{Prop_conformal map} gives rise to Theorem~\ref{Thm_insertion at 0} (ii) under the assumption that $\widehat{S}$ is simply connected. 
However, there is no general theory characterising the connectivity of the droplet. 
(Nevertheless, we refer the reader to \cite{MR3454377,lee2014sharpness} for sharp connectivity bounds of the droplets associated with a class of potentials.)
Thus we need to directly verify the variational principles as in Subsection~\ref{Subsec_pre}. 

\begin{proof}[Proof of Proposition~\ref{Prop_conformal map} (i)]
By differentiating the variational equality \eqref{vari eq_prec}, we have
\begin{equation}
\partial_\zeta \widehat{Q}(\zeta)=C(\zeta):=\int \frac{d\widehat{\mu}(z)}{\zeta-z}  , \qquad \zeta \in \widehat{S}. 
\end{equation}	
Using \eqref{Q hat}, this can be rewritten as
\begin{equation}
\bar{\zeta}=\zeta \Big[  (1-\tau^2)  \Big( C(\zeta)+\frac{c}{\zeta} \Big)+\tau   \Big]^2.
\end{equation}
Therefore the Schwarz function $F$ associated with the droplet $\widehat{S}$ exists. 
Furthermore, it is expressed in terms of $C$ as 
\begin{equation} 
F(\zeta)=\zeta \Big[  (1-\tau^2)  \Big( C(\zeta)+\frac{c}{\zeta} \Big)+\tau   \Big]^2.
\end{equation}

Note that for $z \in \partial \mathbb{D},$
\begin{equation}
\overline{f( 1/\bar{z} )}=\overline{f(z)}
= f(z) \Big[  (1-\tau^2)  \Big( C(f(z))+\frac{c}{f(z)} \Big)+\tau  \Big]^2.
\end{equation}
Using this, we define $f: \bar{\mathbb{D}}\backslash \{0\} \to \mathbb{C}$ by analytic continuation as 
\begin{equation} \label{f analytic conti}
f(z):=\overline{  f(1/\bar{z}) \Big[  (1-\tau^2)  \Big( C(f(1/\bar{z}))+\frac{c}{f(1/\bar{z})} \Big)+\tau \Big]^2  }. 
\end{equation}
Therefore $f$ has simple poles only at $0, \infty$ and the point $a \in \R$ such that $f(1/a)=0$, which leads to \eqref{f rational form}. 
\end{proof}

Next, we need to specify the constants $r_j$ and $a.$
For this, we shall find interrelations among the parameters. 

\begin{lem} \label{Lem_r1 r2 r3 r4 relations}
We have 
\begin{equation} \label{r3 r1}
r_3=r_1 \tau^2
\end{equation}
and 
\begin{equation}  \label{r4 r2}
r_4=a(1-\tau^2) \Big( r_2-2\tau(1+c) \Big). 
\end{equation}
Furthermore, we have
\begin{equation}  \label{r2 r1}
r_2= r_1 \frac{1+a^2\tau^2}{1-a^2\tau^2} \frac{a^2-1}{a} +\frac{ 2a^2(1-\tau^2)\tau (1+c) }{1-a^2\tau^2}. 
\end{equation}
\end{lem}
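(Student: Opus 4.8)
The plan is to derive the three relations in Lemma~\ref{Lem_r1 r2 r3 r4 relations} by matching the two representations of the conformal map $f$ on $\partial\mathbb{D}$: the rational form \eqref{f rational form} together with the normalisation \eqref{f(1/a)=0}, and the analytic-continuation formula \eqref{f analytic conti} that expresses $f$ inside $\mathbb{D}$ in terms of the Cauchy transform $C$. The key observation is that $f$ has prescribed singularities (simple poles at $0$, $\infty$, and $a$), so comparing the principal parts of $f$ and of its ``conjugate'' $\overline{f(1/\bar z)}$ at these points, combined with the expansion of $C$ at $\infty$ and at $0$, should pin down the $r_j$ up to the free parameter $a$.

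First I would write $f(z)=r_1 z+r_2+r_3/z+r_4/(z-a)$ and compute $\overline{f(1/\bar z)}=r_1/z+r_2+r_3 z+r_4 z/(1-az)$, which has poles at $\infty$, $0$, and $1/a$. Matching \eqref{f analytic conti}, the pole of $f$ at $0$ must come from the factor $1/f(1/\bar z)$ in that formula — i.e. from the zero of $f$ at $1/a$ — while the pole of $f$ at $a$ must come from the pole of $\overline{f(1/\bar z)}$ at $1/a$ paired appropriately; expanding $\bigl[(1-\tau^2)(C+c/\zeta)+\tau\bigr]^2$ near $\zeta=\infty$ using $C(\zeta)=1/\zeta+m_1/\zeta^2+\cdots$ gives, after substituting $\zeta=f(z)$ and $z\to 0$, that the leading behaviour of the right side of \eqref{f analytic conti} near $z=0$ is $\tau^2\,\overline{f(1/\bar z)}+O(1)$, whose $1/z$-coefficient is $r_1\tau^2$. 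Comparing with the $1/z$-coefficient $r_3$ of $f$ itself yields \eqref{r3 r1}. Next, matching the residue of $f$ at $z=a$: the right side of \eqref{f analytic conti} has a pole there because $f(1/\bar z)$ has a pole at $1/a$, so one expands the squared bracket near $\zeta=\infty$ to the relevant order and reads off the residue in terms of $r_1,r_2$ and the moments $m_0=1$, $m_1$; using $m_1=\int z\,d\widehat\mu(z)$ (whose value is controlled by the earlier moment formula, or equivalently by the $\tau\uparrow 1$ Marchenko–Pastur data, giving the $2\tau(1+c)$ term) produces \eqref{r4 r2}.

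For \eqref{r2 r1}, I would combine \eqref{f(1/a)=0}, namely $r_1/a+r_2+r_3 a+a r_4/(1-a^2)=0$, with the two relations just obtained: substitute $r_3=r_1\tau^2$ and $r_4=a(1-\tau^2)(r_2-2\tau(1+c))$ into it and solve the resulting linear equation for $r_2$. This is pure algebra — clearing the denominator $1-a^2\tau^2$ and collecting the $r_2$ terms — and should land exactly on the stated expression $r_2=r_1\frac{1+a^2\tau^2}{1-a^2\tau^2}\frac{a^2-1}{a}+\frac{2a^2(1-\tau^2)\tau(1+c)}{1-a^2\tau^2}$.

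The main obstacle I anticipate is the bookkeeping in the Laurent expansion of the squared bracket $\bigl[(1-\tau^2)(C(\zeta)+c/\zeta)+\tau\bigr]^2$ under the substitution $\zeta=f(1/\bar z)$: one needs the expansion to the order at which the pole of $f(1/\bar z)$ at $1/a$ interacts with the subleading terms of $C$, and correctly identifying which moment of $\widehat\mu$ enters \eqref{r4 r2} — establishing $m_1=2\tau(1+c)$ independently, or justifying it from the normalisation $\int d\widehat\mu=1$ plus the symmetry/scaling of $\widehat Q$ — is where care is required. Once the value of that first moment is fixed, everything else is a determinate linear computation and the three identities drop out.
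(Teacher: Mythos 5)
Your treatment of \eqref{r3 r1} (matching the $1/z$-coefficients at $z\to0$ in \eqref{f analytic conti}) and of \eqref{r2 r1} (substituting \eqref{r3 r1} and \eqref{r4 r2} into \eqref{f(1/a)=0} and solving linearly for $r_2$) coincides with the paper's argument and is fine. The gap is in your derivation of \eqref{r4 r2}. First, the mechanism you describe is misidentified: $\overline{f(1/\bar z)}$ has its pole at $z=1/a\notin\bar{\mathbb{D}}$, not at $z=a$; the pole of $f$ at $z=a$ is produced by the \emph{zero} of $f(1/\bar z)$ at $z=a$ (i.e.\ $f(1/a)=0$) entering through the term $c/f(1/\bar z)$ inside the squared bracket of \eqref{f analytic conti}. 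Second, matching the residue at $z=a$ does not yield \eqref{r4 r2} at all: since the cross term $2\,(1-\tau^2)c\,\bigl[(1-\tau^2)C(f(1/\bar z))+\tau\bigr]$ is regular at $z=a$, the residue there equals $(1-\tau^2)^2c^2\,\mathrm{Res}_{z=a}\bigl[1/\overline{f(1/\bar z)}\bigr]$, which involves $c^2$ and no moment of $\widehat\mu$; this is exactly how the paper obtains the \emph{different} relation of Lemma~\ref{Lem_r1 r2 eq}, used only later. Third, the quantity $2\tau(1+c)$ in \eqref{r4 r2} is not the first moment $m_1$ of $\widehat\mu$, and your proposed justifications for $m_1=2\tau(1+c)$ are not available: the moment formula in the paper is proved only in the post-critical case, and the $\tau\uparrow1$ Marchenko--Pastur data cannot fix a $\tau$-dependent moment of the unknown pre-critical droplet; as described, that step is either circular or simply unsupported.

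The correct (and simpler) route, which is the paper's: push the expansion at $z\to0$ one order further and compare the \emph{constant} terms. Using only $C(\zeta)=1/\zeta+O(1/\zeta^2)$ (mass one) and $1/\overline{f(1/\bar z)}=z/r_1-r_2z^2/r_1^2+O(z^3)$, the conjugated bracket is $\tau+(1-\tau^2)(1+c)z/r_1+O(z^2)$, so
\begin{equation*}
f(z)=\frac{r_1\tau^2}{z}+\Bigl(r_2\tau^2+2\tau(1-\tau^2)(1+c)\Bigr)+O(z),
\end{equation*}
while the rational form gives $f(z)=r_3/z+\bigl(r_2-r_4/a\bigr)+O(z)$; equating the constant terms gives $r_4/a=(1-\tau^2)\bigl(r_2-2\tau(1+c)\bigr)$, i.e.\ \eqref{r4 r2}, with the factor $1+c$ arising as $m_0+c$ rather than from $m_1$. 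With \eqref{r4 r2} obtained this way, the rest of your plan goes through unchanged.
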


\begin{proof}

Note that 
\begin{equation}
\overline{f(1/\bar{z})}=\frac{r_1}{z}+r_2+r_3 z+\frac{r_4z}{1-az}.
\end{equation}
Therefore, we have
\begin{equation}
\frac{1}{ \overline{f(1/\bar{z})} }= \frac{1}{r_1}\,z-\frac{r_2}{r_1^2} \, z^2+\frac{ r_2^2-r_1r_3-r_1r_4 }{ r_1^3 } \, z^3  +O(z^4), \qquad z \to 0. 
\end{equation}

Since the Cauchy transform $C$ satisfies the asymptotic behaviour
\begin{equation}
C(\zeta)=\frac{1}{\zeta}+O(\frac{1}{\zeta^2}), \qquad \zeta \to \infty,
\end{equation}
we have
\begin{equation}
\overline{ C( f(1/\bar{z}) ) }=\frac{1}{r_1}\,z+O(z^2), \qquad z \to 0. 
\end{equation}
Combining these equations with \eqref{f analytic conti}, we obtain
\begin{align} \label{f asymp 0 v1}
f(z) = \frac{r_1\tau^2}{z}+ \Big( r_2\tau^2+2\tau(1-\tau^2)(1+c) \Big)+O(z), \qquad z \to 0. 
\end{align}
On the other hand, by using \eqref{f rational form}, we have
\begin{equation} \label{f asymp 0 v2}
f(z)=\frac{r_3}{z}+\Big( r_2-\frac{r_4}{a} \Big)+O(z), \qquad z \to 0.  
\end{equation}
Then by comparing the coefficients in \eqref{f asymp 0 v1} and \eqref{f asymp 0 v2}, we obtain \eqref{r3 r1} and \eqref{r4 r2}.

Note that by \eqref{f(1/a)=0}, we have 
\begin{equation}
r_4=\frac{a^2-1}{a}\Big( \frac{r_1}{a}+r_2+r_3 a \Big).
\end{equation}
Then by \eqref{r3 r1}, we have
\begin{equation}
r_4=\frac{a^2-1}{a}\Big( \frac{r_1}{a}+r_2+r_1 a \tau^2\Big)= r_1 \frac{(1+a^2\tau^2)(a^2-1)}{a^2}+r_2 \frac{a^2-1}{a}.
\end{equation}
Combining this identity with \eqref{r4 r2}, we obtain 
\begin{equation}
a(1-\tau^2) r_2-2a(1-\tau^2)\tau(1+c)= r_1 \frac{(1+a^2\tau^2)(a^2-1)}{a^2}+r_2 \frac{a^2-1}{a},
\end{equation}
which leads to \eqref{r2 r1}. 
\end{proof}

\begin{lem} \label{Lem_r1 r2 eq}
We have 
\begin{equation}
\Big( (2-a^2+a^4\tau^2)r_1+ar_2 \Big) \Big(  r_2-2\tau(1+c) \Big) = (1-\tau^2) c^2 a(a^2-1) .
\end{equation}
\end{lem}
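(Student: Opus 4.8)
The plan is to extract the missing relation by examining the analytic continuation \eqref{f analytic conti} of $f$ near the preimage of the point charge. Since $a\in(-1,0)$ forces $1/a<-1$, so $1/a\in\bar{\mathbb D}^c$, and since $f$ maps $\bar{\mathbb D}^c$ conformally onto $\widehat S^c$, the point $0=f(1/a)$ lies in the \emph{open} set $\widehat S^c$; hence the Cauchy transform $C$ is holomorphic near $0$, with $C(0)$ finite. Now read \eqref{f analytic conti} for $z$ near $a$: because $a\in(-1,0)\subset\mathbb D$, that identity \emph{is} the definition of $f$ there, and as $z\to a$ we have $1/\bar{z}\to 1/a$ and $f(1/\bar{z})\to f(1/a)=0$, so the bracket $(1-\tau^2)\big(C(f(1/\bar{z}))+c/f(1/\bar{z})\big)+\tau$ is governed by the simple pole $\tfrac{(1-\tau^2)c}{f'(1/a)\,(1/\bar{z}-1/a)}$. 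Here $f'(1/a)\neq 0$ (conformality on $\bar{\mathbb D}^c$), and $f'(1/a)\in\mathbb R$ because $\widehat Q(\zeta)=\widehat Q(\bar\zeta)$ makes $\widehat S$, and hence $f$, symmetric about $\mathbb R$. Squaring, multiplying by $f(1/\bar{z})$, using $1/\bar{z}-1/a=(a-\bar{z})/(a\bar{z})$, and taking the overall conjugate leaves a \emph{simple} pole in $z$ at $z=a$:
\begin{equation*}
f(z)=-\frac{(1-\tau^2)^2 c^2 a^2}{f'(1/a)}\cdot\frac{1}{z-a}+O(1),\qquad z\to a.
\end{equation*}

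Comparing this residue with the one read off from the form \eqref{f rational form} (namely $r_4$) gives the clean relation $r_4\,f'(1/a)=-(1-\tau^2)^2c^2a^2$. It then remains to express $f'(1/a)$ through $r_1,r_2,a$. Differentiating \eqref{f rational form}, $f'(1/a)=r_1-r_3a^2-\tfrac{r_4a^2}{(1-a^2)^2}$; substituting $r_3=r_1\tau^2$ from \eqref{r3 r1}, and using \eqref{f(1/a)=0} (which, after multiplying by $a$ and inserting $r_3=r_1\tau^2$, yields $\tfrac{r_4a^2}{1-a^2}=-\big(r_1(1+a^2\tau^2)+ar_2\big)$), one collapses the expression to
\begin{equation*}
f'(1/a)=\frac{(2-a^2+a^4\tau^2)\,r_1+a\,r_2}{1-a^2}.
\end{equation*}
Finally, insert $r_4=a(1-\tau^2)\big(r_2-2\tau(1+c)\big)$ from \eqref{r4 r2} into $r_4\,f'(1/a)=-(1-\tau^2)^2c^2a^2$, multiply through by $(1-a^2)$ and divide by $a(1-\tau^2)$; the result is precisely
\begin{equation*}
\big((2-a^2+a^4\tau^2)r_1+ar_2\big)\big(r_2-2\tau(1+c)\big)=(1-\tau^2)c^2a(a^2-1),
\end{equation*}
as claimed.

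The one genuinely delicate point is the first paragraph: one must be sure that \eqref{f analytic conti} is being used legitimately as the definition of $f$ on a full neighbourhood of $z=a$ inside $\mathbb D$, that $0\notin\widehat S$ so that $C$ contributes nothing singular at $f(1/a)$, and that the conjugations (and the reality of $a$ and $f'(1/a)$) are tracked correctly — everything downstream is routine algebra combining \eqref{r3 r1}, \eqref{r4 r2} and \eqref{f(1/a)=0} from Lemma~\ref{Lem_r1 r2 r3 r4 relations} and Proposition~\ref{Prop_conformal map}. An alternative, more pedestrian route would be to push the small-$z$ expansion of $f$ one order beyond what was used in Lemma~\ref{Lem_r1 r2 r3 r4 relations} and match the linear coefficient; but that expansion involves the first moment $\int z\,d\widehat\mu(z)$, which is not available at this stage, so the residue-at-$a$ argument is the clean one.
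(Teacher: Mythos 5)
Your argument is correct and is essentially the paper's own proof: both extract the simple pole of $f$ at $z=a$ from the continuation formula \eqref{f analytic conti}, using that $f(1/\bar z)$ has a simple zero there and that $C$ is regular at $0\in\widehat S^c$, identify the residue with $r_4$, and then combine with \eqref{r3 r1}, \eqref{r4 r2} and \eqref{f(1/a)=0}. The only cosmetic difference is that you package the residue as $-(1-\tau^2)^2c^2a^2/f'(1/a)$ and compute $f'(1/a)=\bigl((2-a^2+a^4\tau^2)r_1+ar_2\bigr)/(1-a^2)$, whereas the paper expands $1/\overline{f(1/\bar z)}$ near $z=a$ directly; these are the same computation and your algebra checks out.
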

\begin{proof}
Using \eqref{f(1/a)=0}, we have
\begin{equation}
\frac{1}{ \overline{f(1/\bar{z})} }= \frac{a^2(a^2-1) }{ (2-a^2)r_1+ar_2+a^4 r_3 }\,\frac{1}{z-a} +O(1), \qquad z \to a.
\end{equation}
Then by \eqref{f analytic conti} and \eqref{r3 r1}, we obtain 
\begin{equation}
r_4=  \frac{(1-\tau^2)^2 c^2 \, a^2(a^2-1) }{ (2-a^2)r_1+ar_2+a^4 r_3 }= \frac{(1-\tau^2)^2 c^2 \, a^2(1-a^2)^2}{ r_1(a^2\tau^2-1)(1-a^2)^2 +r_4 a^2  }.
\end{equation}
Now lemma follows from \eqref{r4 r2}. 
\end{proof}

\begin{proof}[Proof of Proposition~\ref{Prop_conformal map} (ii)]
Since $\widehat{\mu}$ is a probability measure, we have 
\begin{equation} \label{mass 1 condition}
1=\int_{\widehat{S}} \frac{1}{2(1-\tau^2)} \frac{1}{|z|} \, dA(z)=\frac{1}{2\pi i} \int_{\partial \widehat{S}}  \frac{1}{1-\tau^2} \sqrt{ \frac{\bar{z}}{z} } \, dz,
\end{equation}
where we have used Green's formula for the second identity. 
Using the change of variable $z=f(w)$, where $f$ is of the form \eqref{f rational form}, this can be rewritten as 
\begin{equation} \label{mass 1 var}
\frac{1}{2\pi i} \int_{\partial \mathbb{D}}  \sqrt{ \overline{f(1/\bar{w})} f(w) } \,\frac{f'(w)}{f(w)} \, dw =1-\tau^2.  
\end{equation}

By Lemma~\ref{Lem_r1 r2 r3 r4 relations} and \eqref{f rational form}, we have 
\begin{align}
f(z)&= \frac{1-az}{z(z-a)}\Big( -\frac{r_1}{a} z^2+ \Big( \frac{a^2-1}{a^2}r_1-\frac{r_2}{a} \Big) z -a \tau^2  r_1 \Big),
\\
\overline{f(1/\bar{z})}&= \frac{ z-a }{z(1-az)} \Big( -a \tau^2 r_1  z^2+ \Big( \frac{a^2-1}{a^2}r_1-\frac{r_2}{a} \Big) z-\frac{r_1}{a} \Big). 
\end{align}
Note here that by construction, two zeros of $f$ other than $1/a$ are contained in the unit disc.  
Using these together with straightforward residue calculus, we obtain  
\begin{equation} \label{Residue at 0}
\textup{Res}_{w=0} \Big[  \sqrt{ \overline{f(1/\bar{w})} f(w) } \,\frac{f'(w)}{f(w)} \Big] =  (1+c)(1-\tau^2)
\end{equation}
and 
\begin{equation} 
\textup{Res}_{w=a} \Big[  \sqrt{ \overline{f(1/\bar{w})} f(w) } \,\frac{f'(w)}{f(w)} \Big] = -\frac{1}{a} \Big[ \Big(   \frac{1+a^2\tau^2}{a}r_1+r_2  \Big)  \Big(  \frac{a^4\tau^2-a^2+2}{a}r_1+r_2 \Big) \Big]^{1/2}.
\end{equation}
Furthermore, it follows from Lemma~\ref{Lem_r1 r2 eq} that 
\begin{equation} \label{Residue at a}
\textup{Res}_{w=a} \Big[  \sqrt{ \overline{f(1/\bar{w})} f(w) } \,\frac{f'(w)}{f(w)} \Big] = - c(1-\tau^2).
\end{equation}
Combining \eqref{mass 1 var}, \eqref{Residue at 0} and \eqref{Residue at a}, one can notice that the function $f$ has a double zero, which implies that 
\begin{equation} \label{r2 r1 v2}
\frac{a^2-1}{a^2}r_1-\frac{r_2}{a}  =2 r_1 \tau. 
\end{equation}
By solving the system of equations given in Lemmas~\ref{Lem_r1 r2 r3 r4 relations}, ~\ref{Lem_r1 r2 eq} and \eqref{r2 r1 v2}, the desired result follows.  
\end{proof}

\begin{rem}[The use of higher moments of the equilibrium measure] \label{Rem_conformal and higher moments}
In a more complicated case, for instance for the case with multiple point charges such as \eqref{Q multiple pts}, the mass-one condition \eqref{mass 1 condition} may not be enough to characterise the parameters.  
In this case, one can further use the higher order asymptotic expansions appearing in the above lemmas, which involve the $k$-th moments of the equilibrium measure; cf. \eqref{Cauchy transform and moments}. 
Thus in principle, one can always find enough (algebraic) interrelations to characterise the parameters appearing in the conformal map. 
\end{rem}

\begin{rem} \label{Rem_post cricial tau0}
For the case $\tau=0$ and $p >0$, it was shown in \cite{MR3280250} that if
$$
c> \frac{(1-p^2)^2}{4p^2},
$$
the droplet associated with \eqref{Q p general} is a simply connected domain whose boundary is given by the image of the conformal map 
	$$
	f(z)=R\,z-\frac{\kappa}{z-q}-\frac{\kappa}{q}, \qquad R=\frac{1+p^2q^2}{2p q}, \qquad \kappa=\frac{(1-q^2)(1-p^2q^2)}{2pq}.
	$$
	Here, $q$ is given by the unique solution of $P(q^2)=0$, where 
	$$
	P(x):=x^3-\Big( \frac{p^2+4c+2}{2p^2} \Big) x^2+\frac{1}{2p^4}
	$$
	such that $0<q<1$ and $\kappa>0.$
 
Beyond the case $\tau=0$, the conformal mapping method described above also works for the potential \eqref{Q p general} with general $\tau \in [0,1), c \in \R$ and $p \in \mathbb{C}$ under the assumption that the associated droplet is simply connected. 
Under this assumption, one can show that the boundary of the droplet is given by the image of the rational conformal map $f$ of the form
\begin{equation} 
f(z)=R_1 \, z+R_2+\frac{R_3}{z}+\frac{R_4}{z-q}, \qquad q \in \mathbb{D},
\end{equation}
which satisfies $f(1/q)=0$. 
Furthermore, following the strategy above, one can characterise the coefficients $R_j$ ($j=1,\dots, 4$) of this rational map as well as the position of the pole $q \in \mathbb{D}$. 

However, as previously mentioned, it is far from being obvious to characterise a condition for which the droplet is simply connected. 
Nevertheless, since the radius of curvature of the ellipse \eqref{S1 Qp} at the point $(1+\tau)\sqrt{1+c}$ is given by 
$$
\frac{(1-\tau)^2}{1+\tau}\,\sqrt{1+c},
$$
one can expect that if
\begin{equation}
p> \max\Big\{ \frac{4\tau}{1+\tau}\sqrt{1+c} \, , \, (1+\tau) \sqrt{1+c}-\sqrt{1-\tau^2}\sqrt{c} \Big\}
\end{equation}
then the droplet is a simply connected domain.
\end{rem}

\section{One-dimensional equilibrium measure problem in the Hermitian limit} \label{Appendix_Hermitian}

In this appendix, we present a proof of \eqref{eq msr 1D}.
Let us write 
\begin{equation}
V(z) \equiv V_p(z)= \frac{z^2}{2}-2c \log|z-p|. 
\end{equation} 
Recall that $\mu_{V} \equiv \mu_{V_p}$ is the equilibrium measure associated with $V_p(x)$ ($x \in \R$).

We define 
\begin{equation}
R(z):=\Big(\frac{V'(z)}{2}\Big)^2- \int_{\R} \frac{V'(z)-V'(s)}{z-s}\,d\mu_V(s).  
\end{equation}
By applying Schiffer variations (see e.g. \cite[Section 3]{del2019equilibrium}), we have
\begin{equation} \label{R var}
R(z)=\Big( \int \frac{d\mu_V(s)}{z-s}-\frac{V'(z)}{2} \Big)^2, \qquad z \in \mathbb{C}\setminus \supp (\mu_V).
\end{equation}
Combining the asymptotic behaviour
$$
\int \frac{d\mu_V(s)}{z-s} \sim \frac{1}{z}, \qquad z \to \infty,
$$
with \eqref{R var}, we obtain 
\begin{equation} \label{R(z) inf}
R(z)=\frac14 z^2-(c+1)-\frac{cp}{z}+O\Big(\frac{1}{z^2}\Big) , \qquad z \to \infty.
\end{equation}

On the other hand, since
\begin{align*}
V'(z)=z-\frac{2c}{z-p}, \qquad \frac{V'(z)-V'(s)}{z-s}=1+\frac{2c}{z-p} \frac{1}{s-p},
\end{align*}
we have 
\begin{align} \label{R eq0}
R(z)=\frac14 \Big( z-\frac{2c}{z-p} \Big)^2-1-\frac{2c}{z-p} \int_{\R} \frac{d\mu_V(s)}{s-p}.
\end{align}
Thus we obtain 
\begin{equation}\label{R(z) p}
R(z)=\frac{c^2}{(z-p)^2}+O\Big(\frac{1}{z-p} \Big),  \qquad z \to p.
\end{equation}

In the expression \eqref{R eq0}, one can observe that $R$ is a rational function with a double pole at $z=p$. Therefore it is of the form
\begin{equation} \label{R structure}
R(z)=\frac14 z^2+\frac{Az^2+Bz+C}{(z-p)^2}
\end{equation}
for some constants $A,B$ and $C$.
As in the previous subsection, we need to specify these parameters. 

By direct computations, we have 
\begin{equation} \label{R(z) inf1}
R(z)=\frac14 z^2+A+\frac{2Ap+B}{z}+O\Big(\frac{1}{z^2}\Big), 
\qquad z \to \infty,
\end{equation}
and
\begin{equation}\label{R(z) p1}
R(z)=\frac{Ap^2+Bp+C}{(z-p)^2}+O\Big( \frac{1}{z-p}\Big), \qquad z \to p.
\end{equation}
By comparing coefficients in \eqref{R(z) inf} and \eqref{R(z) inf1}, we have 
\begin{equation}
A=-c-1, \qquad -cp=2Ap+B.
\end{equation}
Similarly, by \eqref{R(z) p} and \eqref{R(z) p1}, 
\begin{equation}
Ap^2+Bp+C=c^2.
\end{equation}
By solving these algebraic equations, we obtain 
\begin{equation}
B=p(c+2), \qquad C=c^2-p^2.
\end{equation}

Combining all of the above with \eqref{R structure}, we have shown that 
\begin{align}
\begin{split}
R(z)&=\frac14 z^2+\frac{-(c+1)z^2+p(c+2)z+(c^2-p^2)}{(z-p)^2}
\\
&=\frac{ ( (z-p)(z-2)-2c ) (  (z-p)(z+2)-2c  )   }{  4(z-p)^2 }=\frac{\prod_{j=1}^{4}(z-\lambda_j)}{4(z-p)^2},
\end{split}
\end{align}
where $\lambda_j$'s are given by \eqref{lambda 12} and \eqref{lambda 34}. 
Therefore by \eqref{R var}, the Stieltjes transform of $\mu_V$ is given by 
\begin{align}
\begin{split}
\int \frac{d\mu_V(s)}{z-s}&=\frac{V'(z)}{2}-R(z)^{1/2} =\frac{z}{2}-\frac{c}{z-p}-\frac12 \sqrt{ \frac{\prod_{j=1}^{4}(z-\lambda_j)}{(z-p)^2} }.
\end{split}
\end{align}
Letting $z=x+i\eps \to x \in \R$, we find
$$
\lim_{ \eps \to 0+ } \im \int \frac{d\mu_V(s)}{(x+i\eps)-s}=
\begin{cases}
\displaystyle \frac{\sqrt{ -\prod_{j=1}^{4}(x-\lambda_j) } }{2|x-p|} &\text{if } x\in [\lambda_1,\lambda_2]\cup [\lambda_3,\lambda_4],
\smallskip 
\\
0 &\text{otherwise}.
\end{cases}
$$
Now the desired identity \eqref{eq msr 1D} follows from the Sokhotski-Plemelj inversion formula, see e.g. \cite[Section \RN{1}.4.2]{MR1106850}.

\medskip

\subsection*{Acknowledgements} The author is greatly indebted to Yongwoo Lee for the figures and numerical simulations.

\bibliographystyle{abbrv}
\bibliography{RMTbib}

\end{document}